\newif\ifarxiv
\arxivtrue % arXiv version
\ifarxiv
	\documentclass[12pt]{article}
	\usepackage[margin=1in]{geometry}
	\usepackage[slantedGreek]{mathpazo}
	\usepackage[pdftex]{graphicx}
	\usepackage{amsfonts}
	\usepackage{amssymb}
	\usepackage{amsthm}
	\usepackage{graphicx,float}
	\usepackage{amsmath}%
	\usepackage{hyperref}
	\usepackage{natbib}
	\usepackage{color,soul,marginnote}
	
	\newtheorem{theorem}{Theorem}
	\newtheorem{proposition}[theorem]{Proposition}
	\newtheorem{lemma}[theorem]{Lemma}
	\newtheorem{remark}{Remark}
\else
	\documentclass[twoside,11pt]{article}
	\usepackage{jmlr2e}
	\usepackage{lastpage}
	\usepackage{amsmath}
	\jmlrheading{1}{2026}{1-\pageref{LastPage}}{2/26}{}{}{Nicholas G.\ Polson and Vadim Sokolov}
	\ShortHeadings{Fast Compute for ML Optimization}{Polson and Sokolov}
	\firstpageno{1}
\fi

\usepackage{booktabs}
\usepackage{algorithm}
\usepackage{algorithmic}

\DeclareMathOperator{\prox}{prox}
\newcommand{\enorm}[1]{\Vert #1 \Vert_2}
\newcommand{\vnorm}[1]{\left\Vert #1 \right\Vert}
\newcommand{\defeq}{\mathrel{\mathop:}=}

\ifarxiv
\title{Fast Compute for ML Optimization}
\author{
	\makebox[.4\linewidth]{Nick Polson\footnote{Email: \texttt{ngp@chicagobooth.edu}}}\\\textit{  Booth School of Business}\\\textit{  University of Chicago}\\\and
	\makebox[.4\linewidth]{Vadim Sokolov\footnote{Email: \texttt{vsokolov@gmu.edu}}}\\\textit{ Department of Systems Engineering }\\\textit{  and Operations Research}\\\textit{ George Mason University}
}
\date{First Draft: July 5, 2024 \\This Draft: \today}
\else
\title{Fast Compute for ML Optimization}
\author{\name Nicholas G.\ Polson \email ngp@chicagobooth.edu \\
       \addr Booth School of Business\\
       University of Chicago\\
       Chicago, IL 60637, USA
       \AND
       \name Vadim Sokolov \email vsokolov@gmu.edu \\
       \addr Department of Systems Engineering and Operations Research\\
       George Mason University\\
       Fairfax, VA 22030, USA}
\editor{TBD}
\fi

\graphicspath{{./fig/}}

\begin{document}

\ifarxiv
\maketitle
\begin{abstract}
We study optimization for losses that admit a variance-mean scale-mixture representation.  Under this representation, each EM iteration is a weighted least squares update in which latent variables determine observation and parameter weights; these play roles analogous to Adam's second-moment scaling and AdamW's weight decay, but are derived from the model.  The resulting Scale Mixture EM (SM-EM) algorithm removes user-specified learning-rate and momentum schedules.  On synthetic ill-conditioned logistic regression benchmarks with $p \in \{20, \ldots, 500\}$, SM-EM with Nesterov acceleration attains up to $13\times$ lower final loss than Adam tuned by learning-rate grid search.  For a 40-point regularization path, sharing sufficient statistics across penalty values yields a $10\times$ runtime reduction relative to the same tuned-Adam protocol.  For the base (non-accelerated) algorithm, EM monotonicity guarantees nonincreasing objective values; adding Nesterov extrapolation trades this guarantee for faster empirical convergence.
\end{abstract}

\medskip
\noindent\textbf{Keywords:} scale mixture of normals, EM algorithm, P\'{o}lya--Gamma augmentation, Robbins--Monro, Nesterov acceleration, logistic regression
\else
\maketitle
\begin{abstract}%
We study optimization for losses that admit a variance-mean scale-mixture representation.  Under this representation, each EM iteration is a weighted least squares update in which latent variables determine observation and parameter weights; these play roles analogous to Adam's second-moment scaling and AdamW's weight decay, but are derived from the model.  The resulting Scale Mixture EM (SM-EM) algorithm removes user-specified learning-rate and momentum schedules.  On synthetic ill-conditioned logistic regression benchmarks with $p \in \{20, \ldots, 500\}$, SM-EM with Nesterov acceleration attains up to $13\times$ lower final loss than Adam tuned by learning-rate grid search, with the gap widening in dimension and condition number.  For a 40-point regularization path, sharing sufficient statistics across penalty values yields a $10\times$ runtime reduction.  The base (non-accelerated) algorithm inherits EM monotonicity, guaranteeing nonincreasing objective values; Nesterov extrapolation trades this guarantee for faster empirical convergence.
\end{abstract}

\begin{keywords}
  scale mixture of normals, EM algorithm, P\'{o}lya--Gamma augmentation, proximal algorithms, Nesterov acceleration, logistic regression
\end{keywords}
\fi

\section{Introduction}

Modern stochastic gradient descent (SGD) requires a step size (learning rate) $\alpha_t$ that must be carefully tuned. Classical Robbins--Monro theory \citep{robbins1951stochastic} establishes that convergence requires $\sum_t \alpha_t = \infty$ and $\sum_t \alpha_t^2 < \infty$, but this leaves wide latitude in choosing a specific schedule. In practice, fixed step sizes are the default because backtracking line searches are prohibitively expensive for large-scale problems.

A fundamental advance was the recognition that the Robbins--Monro recursion is, at its core, a proximal algorithm.  \citet{toulis2017asymptotic} showed that replacing the explicit SGD update $\theta_n = \theta_{n-1} - \gamma_n \nabla\ell(\theta_{n-1})$ with the implicit update $\theta_n = \theta_{n-1} - \gamma_n \nabla\ell(\theta_n)$ yields an iteration that is unconditionally stable: the implicit step shrinks the explicit one by the observed Fisher information, $\Delta\theta_n^{\mathrm{im}} \approx (I + \gamma_n\hat{\mathcal{I}}_n)^{-1}\Delta\theta_n^{\mathrm{sgd}}$, preventing divergence without sacrificing asymptotic efficiency.  \citet{toulis2021proximal} formalized this insight by showing that the implicit update is exactly the proximal point algorithm of \citet{rockafellar1976monotone}:
\[
\theta_n = \prox_{\gamma_n F}(\theta_{n-1}) = \arg\min_\theta\left\{\tfrac{1}{2\gamma_n}\|\theta - \theta_{n-1}\|^2 + F(\theta)\right\},
\]
which contracts iterates toward the solution at every step.  Their proximal Robbins--Monro framework unifies explicit SGD, implicit SGD, and stochastic proximal gradient methods, and provides convergence guarantees with improved stability.  However, all variants still require a user-specified step size schedule $\gamma_n$ satisfying the Robbins--Monro conditions, and the implicit equation must be solved approximately (e.g., by a Newton step involving the Hessian).

Two further lines of work address the step size problem through different means.  Momentum methods---Polyak's heavy ball \citep{polyak1964some} and Nesterov acceleration \citep{nesterov1983method}---speed convergence by accumulating past gradient information, but still require a global step size tied to the Lipschitz constant $L$.  Adaptive gradient methods---AdaGrad \citep{duchi2011adaptive}, Adam \citep{kingma2014adam}, AdamW \citep{loshchilov2019decoupled}---maintain per-parameter scaling via running estimates of gradient moments.  Adam updates
\begin{equation}\label{eq:adam}
\theta_{t+1} = \theta_t - \frac{\alpha}{\sqrt{\hat{v}_t} + \epsilon}\,\hat{m}_t\,,
\end{equation}
where $\hat{m}_t$ and $\hat{v}_t$ are bias-corrected first- and second-moment estimates.  AdamW decouples weight decay from the gradient scaling: $\theta_{t+1} = (1 - \lambda)\theta_t - \frac{\alpha}{\sqrt{\hat{v}_t} + \epsilon}\hat{m}_t$.  Despite their practical success, these methods rely on heuristic moment estimates, and their hyperparameters ($\beta_1, \beta_2, \epsilon, \lambda$) require problem-specific tuning.  \citet{reddi2018convergence} showed that Adam can fail to converge even on convex problems.

In this paper, we take the proximal viewpoint of \citet{toulis2017asymptotic, toulis2021proximal} a step further.  Using the data-augmentation framework of \citet{polson2013data} and \citet{polson2016mixtures}, we show that for losses admitting a scale mixture of normals representation, the proximal step can be solved in closed form as a weighted least squares problem, with the step size determined entirely by the model.  Specifically, the EM algorithm \citep{dempster1977maximum} applied to the augmented representation yields an iteration in which latent variables $\omega$ (observation weights) and $\lambda$ (parameter weights) play roles analogous to the Fisher information preconditioner in implicit SGD, but are derived from the loss structure rather than estimated from gradient samples.  The resulting Scale Mixture EM (SM-EM) algorithm eliminates the learning rate, momentum, and decay schedules entirely; the remaining regularization parameter can be selected by marginal likelihood.

Because the latent weights adapt to local curvature, the method takes conservative updates in high-curvature regions and larger updates in flatter regions.  Section~\ref{sec:empirical} shows that this mechanism yields lower loss than Adam on the ill-conditioned synthetic benchmarks considered here, without tuning optimizer schedules.

The remaining computational bottleneck is the $O(p^3)$ linear solve in the M-step.  \citet{polson2026fast} show that Halton's sequential Monte Carlo method \citep{halton1994sequential} reduces this cost substantially.  By writing the M-step system in Neumann-series form and iteratively correcting residuals, the sequential method achieves geometric convergence at rate $\rho^N$ (where $\rho$ is the spectral radius of the iteration matrix).  A subsampled variant further reduces the per-iteration cost from $O(p^2 s)$ to $O(ps)$ for $s$ correction steps.  The full complexity comparison is:
\[
\underbrace{O(p^3)}_{\text{Direct (Cholesky)}} \;\gg\; \underbrace{O(p^2 s)}_{\text{Gauss--Seidel}} \;\gg\; \underbrace{O(ps)}_{\text{Sampled Seq.\ MC}}\,.
\]
In a scaling study on diagonally dominant systems at $p = 5{,}000$, the sampled sequential method is approximately $24\times$ faster than Gauss--Seidel \citep{polson2026fast}; these timings illustrate asymptotic scaling and depend on implementation details.

\paragraph{Scope.} The approach applies to losses admitting a scale mixture of normals representation (logistic, hinge, check, bridge penalties); limitations and extensions are discussed in Section~\ref{sec:discussion}.

Section~\ref{sec:background} reviews stochastic approximation and adaptive gradient methods.  Section~\ref{sec:scalemix} develops the scale mixture representation.  Section~\ref{sec:smrm} shows how it replaces Robbins--Monro and connects to Adam/AdamW.  Section~\ref{sec:proximal} connects the framework to proximal operators and MAP estimation.  Section~\ref{sec:empirical} provides empirical comparisons.  Section~\ref{sec:applications} gives further applications. Section~\ref{sec:discussion} concludes.

\section{Background}\label{sec:background}

\subsection{Robbins--Monro Stochastic Approximation}

The Robbins--Monro procedure \citep{robbins1951stochastic} seeks a root $\theta^\star$ of $M(\theta) = \mathbb{E}[H(\theta, \xi)] = 0$, where $\xi$ is a random variable, via the recursion
\begin{equation}\label{eq:rm}
\theta_{t+1} = \theta_t - \alpha_t\,H(\theta_t, \xi_t)\,.
\end{equation}
When $H(\theta, \xi) = \nabla_\theta \ell(\theta; \xi)$ for a loss $\ell$, this reduces to stochastic gradient descent (SGD)---Robbins--Monro and SGD are the same algorithm viewed from different perspectives: root-finding versus optimization.  Classical convergence requires the step sizes $\alpha_t$ to satisfy $\sum_t \alpha_t = \infty$ (to reach any optimum) and $\sum_t \alpha_t^2 < \infty$ (to control variance).  The ODE method of \citet{kushner2003stochastic} interprets the iterates as a noisy discretization of $d\theta/dt = M(\theta)$, and \citet{polyak1992acceleration} show that averaging the iterates $\bar{\theta}_T = T^{-1}\sum_{t=1}^T \theta_t$ achieves the optimal $O(1/T)$ rate.

The convergence proof relies on a supermartingale argument: a Lyapunov function $V(\theta_t) = \|\theta_t - \theta^\star\|^2$ satisfies $\mathbb{E}[V(\theta_{t+1}) \mid \theta_t] \leq V(\theta_t) - 2\alpha_t M(\theta_t)^\top(\theta_t - \theta^\star) + \alpha_t^2 C$, and the Robbins--Monro conditions ensure that the accumulated noise converges almost surely.

Despite well-developed theory, the practical difficulty remains: the constants hidden in the convergence conditions depend on the problem, and choosing $\alpha_t$ too large causes divergence while choosing it too small causes stagnation.

\subsection{Momentum and Acceleration}

Polyak's heavy ball method \citep{polyak1964some} adds a momentum term to gradient descent:
\begin{equation}\label{eq:heavyball}
\theta_{t+1} = \theta_t - \alpha\,\nabla f(\theta_t) + \mu\,(\theta_t - \theta_{t-1})\,,
\end{equation}
achieving a convergence rate of $O(\sqrt{L/\mu}\log(1/\epsilon))$ on strongly convex quadratics, where $L/\mu$ is the condition number---a quadratic improvement over gradient descent.  \citet{nesterov1983method} introduced the accelerated gradient method, which evaluates the gradient at a lookahead point:
\begin{equation}\label{eq:nag}
v_t = \theta_t + \mu_t(\theta_t - \theta_{t-1})\,, \quad \theta_{t+1} = v_t - \alpha\,\nabla f(v_t)\,,
\end{equation}
and achieves the optimal $O(1/t^2)$ rate for convex objectives with $L$-Lipschitz gradients \citep{nesterov2013introductory}.  \citet{su2016differential} interpret Nesterov acceleration as a discretization of the ODE $\ddot{X} + (3/t)\dot{X} + \nabla f(X) = 0$, connecting it to continuous-time dynamics.

Both methods require the step size $\alpha = 1/L$, which depends on the global Lipschitz constant.  In the stochastic setting, momentum does not improve worst-case rates beyond $O(1/\sqrt{T})$ \citep{sutskever2013importance}, and Nesterov acceleration requires variance reduction techniques to transfer its deterministic gains.

\subsection{Proximal Robbins--Monro}

\citet{toulis2017asymptotic} introduced implicit stochastic gradient descent, in which the gradient is evaluated at the \emph{next} iterate:
\begin{equation}\label{eq:prm}
\theta_{t+1} = \theta_t - \alpha_t\,\nabla \ell(\theta_{t+1}; \xi_t)\,.
\end{equation}
Because $\theta_{t+1}$ appears on both sides, the update is solved approximately via a Newton step:
$\theta_{t+1} \approx \theta_t - \alpha_t\,(I + \alpha_t H_t)^{-1}\nabla \ell(\theta_t; \xi_t)$,
where $H_t$ is the Hessian of the loss.  The factor $(I + \alpha_t H_t)^{-1}$ shrinks the explicit SGD step by the observed Fisher information, providing unconditional stability without sacrificing asymptotic efficiency \citep{toulis2017asymptotic}.  \citet{toulis2021proximal} formalized this connection by showing that the implicit update~\eqref{eq:prm} is exactly a proximal point step, $\theta_{t+1} = \prox_{\alpha_t \ell}(\theta_t)$, thereby establishing the Robbins--Monro recursion as fundamentally a proximal algorithm \citep{toulis2014statistical}.  However, the implicit equation still requires approximate solution (e.g., one Newton iteration per data point), and a step size schedule $\alpha_t$ satisfying the Robbins--Monro conditions remains necessary.

\subsection{Adaptive Gradient Methods}

AdaGrad \citep{duchi2011adaptive} introduced per-parameter step sizes by accumulating squared gradients: $\theta_{t+1} = \theta_t - \alpha\,g_t / \sqrt{G_t + \epsilon}$, where $G_t = \sum_{s=1}^t g_s^2$.  This achieves data-dependent regret bounds but suffers from monotonically decaying step sizes.

Adam \citep{kingma2014adam} replaced the cumulative sum with exponential moving averages:
\begin{align}
m_t &= \beta_1 m_{t-1} + (1 - \beta_1)\,g_t\,, \label{eq:adam_m}\\
v_t &= \beta_2 v_{t-1} + (1 - \beta_2)\,g_t^2\,, \label{eq:adam_v}\\
\theta_{t+1} &= \theta_t - \frac{\alpha}{\sqrt{\hat{v}_t} + \epsilon}\,\hat{m}_t\,, \label{eq:adam_update}
\end{align}
where $g_t = \nabla \ell(\theta_t; \xi_t)$ and $\hat{m}_t, \hat{v}_t$ are bias-corrected.  The per-parameter scaling $1/(\sqrt{\hat{v}_t} + \epsilon)$ adapts the effective step size to the curvature of the loss along each coordinate.  \citet{reddi2018convergence} showed that Adam can fail to converge even on convex problems, owing to the non-monotone behavior of $v_t$; \citet{defossez2022simple} later established convergence under appropriate step size decay.

The scaling $1/\sqrt{\hat{v}_t}$ approximates a diagonal preconditioner related to the Fisher information matrix \citep{amari1998natural}.  Adam can be viewed as a diagonal approximation to natural gradient descent, but one derived from gradient samples rather than from the model structure.

AdamW \citep{loshchilov2019decoupled} decouples weight decay from the adaptive gradient scaling:
\begin{equation}\label{eq:adamw}
\theta_{t+1} = (1 - \lambda_{\mathrm{wd}})\,\theta_t - \frac{\alpha}{\sqrt{\hat{v}_t} + \epsilon}\,\hat{m}_t\,.
\end{equation}
For SGD, $L^2$ regularization and weight decay are equivalent; for Adam, they are not.  Only decoupled weight decay preserves the correspondence between weight decay and a Gaussian prior on the parameters.  The weight decay $\lambda_{\mathrm{wd}}$ is a fixed hyperparameter that must be tuned jointly with the learning rate.

All adaptive methods can be written as preconditioned SGD: $\theta_{t+1} = \theta_t - \alpha_t P_t^{-1} g_t$, where $P_t = I$ for SGD, $P_t = \mathrm{diag}(\sqrt{v_t})$ for Adam, and $P_t = F(\theta_t)$ for natural gradient.  For generalized linear models, the natural gradient method reduces to Fisher scoring, which iterates $\theta_{t+1} = \theta_t - I(\theta_t)^{-1}\nabla \ell(\theta_t)$ where $I(\theta) = X^\top W X$ is the expected Fisher information with $W = \mathrm{diag}(w_i(1-w_i))$ for logistic regression.  Fisher scoring is equivalent to IRLS \citep{green1984iteratively}.  We show that the scale mixture framework derives $P_t$ from the loss function's generative structure: the SM-EM M-step uses the precision $P_t = X^\top\hat{\Omega}_t X$, which is a stabilized Fisher scoring step---the P\'olya--Gamma weights $\hat{\omega}_i = \tanh(z_i/2)/(2z_i)$ replace the IRLS weights $w_i(1-w_i)$ that can cause numerical instability when observations are well separated.

\subsection{Convergence Rates: When Does Acceleration Help?}

The achievable convergence rate depends critically on the problem structure and access to gradient information.  Table~\ref{tab:rates} summarizes the landscape.

\begin{table}[t]
\centering
\caption{Convergence rates for convex objectives.  Nesterov acceleration achieves $O(1/t^2)$ only in the deterministic or variance-reduced settings; pure stochastic methods are limited by gradient noise.}\label{tab:rates}
\begin{tabular}{llll}
\hline
\textbf{Setting} & \textbf{Convex} & \textbf{Strongly convex} & \textbf{Method} \\
\hline
Deterministic  & $O(1/t^2)$ & linear $O(\rho^t)$ & Nesterov/FISTA \\
Stochastic     & $O(1/\sqrt{t})$ & $O(1/t)$ & SGD \\
Finite-sum + VR & $O(1/t^2)$ & linear & SVRG/SAGA + accel. \\
\hline
\end{tabular}
\end{table}

In the pure stochastic setting, irreducible gradient variance $\mathbb{E}[\|g_t - \nabla f(\theta_t)\|^2] = \sigma^2 > 0$ prevents faster-than-$O(1/\sqrt{t})$ convergence for convex problems, regardless of momentum \citep{lan2012optimal}.  Variance reduction methods---SVRG \citep{johnson2013accelerating}, SAGA \citep{defazio2014saga}---reduce the effective variance to zero near the optimum by maintaining gradient tables, enabling acceleration.  \citet{allenzhu2017katyusha} combined Nesterov momentum with SVRG to achieve the optimal $\tilde{O}(\sqrt{nL/\mu}\log(1/\epsilon))$ complexity for finite sums.

The scale mixture EM framework sidesteps the stochastic-noise barrier: each full-batch EM iteration is deterministic, so Nesterov extrapolation can be applied without variance reduction.  The base (non-accelerated) SM-EM inherits the EM monotonicity property, guaranteeing nonincreasing objective values.  Adding Nesterov extrapolation sacrifices this monotonicity---the extrapolated iterates can temporarily increase the objective, as is standard for accelerated methods---but yields faster convergence empirically.  Section~\ref{sec:empirical} reports substantial speedups (32--55\% on the ill-conditioned benchmarks).

\section{Scale Mixtures and Data Augmentation}\label{sec:scalemix}

Following \citet{polson2013data}, consider a general regularized objective
\begin{equation}\label{eq:objective}
L(\beta) = \sum_{i=1}^n f(y_i, x_i^\top \beta) + \sum_{j=1}^p g(\beta_j / \tau)\,,
\end{equation}
where $f$ is a loss function (negative log-likelihood) and $g$ is a penalty (negative log-prior).  Interpreting loss and penalty as negative log-densities, the optimizer $\hat{\beta} = \arg\min L(\beta)$ is the mode of the pseudo-posterior
\begin{equation}\label{eq:pseudo_posterior}
p(\beta \mid y) \propto e^{-L(\beta)} = \underbrace{\prod_{i=1}^n p(z_i \mid \beta, \sigma)}_{\text{likelihood}} \;\cdot\; \underbrace{\prod_{j=1}^p p(\beta_j \mid \tau)}_{\text{prior}}\,,
\end{equation}
where $z_i$ is a working response ($y_i - x_i^\top\beta$ for regression, $x_i^\top\beta$ for classification).  This recasting is the key step: optimization of $L(\beta)$ becomes finding the mode of a distribution, and every tool of Bayesian computation---EM, MCMC, variational methods---becomes available.  In particular, one can sample from the full posterior over $(\beta, \omega, \lambda)$ by Gibbs sampling, or find $\hat{\beta}$ by the EM algorithm.

The pseudo-posterior~\eqref{eq:pseudo_posterior} is typically non-Gaussian and may be non-convex.  The \emph{scale mixture representation} resolves both difficulties by writing each factor as a marginal of a higher-dimensional Gaussian joint distribution, introducing latent variables $\omega_i$ (for observations) and $\lambda_j$ (for parameters).  Conditional on $(\omega, \lambda)$, the augmented model is Gaussian in $\beta$---a weighted least squares problem.  The augmentation ``convexifies'' the objective by lifting it to a higher dimension: although the marginal $L(\beta)$ may have complex curvature, the complete-data log-posterior is quadratic in $\beta$ for fixed $(\omega, \lambda)$.

This lifted representation admits two dual algorithmic strategies \citep{polson2016mixtures}.  The first is \emph{marginalization} (EM): integrate out $(\omega, \lambda)$ in the E-step to form $Q(\beta \mid \beta^{(t)})$, then maximize the resulting weighted least squares in the M-step.  The second is \emph{profiling} (MM/majorize--minimize): construct a convex envelope of $L(\beta)$ by optimizing over $(\omega, \lambda)$, yielding a majorizing function that is minimized at each step.  Both strategies produce the same iterates---the EM conditional expectations $\hat{\omega}_i = E[\omega_i \mid \beta, z]$ equal the envelope minimizers---a correspondence \citet{polson2016mixtures} call \emph{hierarchical duality}.

The envelope view connects directly to proximal algorithms (Section~\ref{sec:proximal}).  The majorizing function defines a quadratic upper bound whose minimum is a proximal step with an adaptive step size determined by the latent variables.  The tighter the majorizer, the faster the convergence: the half-quadratic envelope \citep{geman1992constrained, geman1995nonlinear} exploits the specific curvature of each loss via its sub-differential, outperforming generic Lipschitz bounds.  This is the mechanism that gives SM-EM its adaptive, per-observation step sizes.

\subsection{Variance-Mean Mixtures of Normals}

The key representation is as a variance-mean mixture of normals.  If
\begin{equation}\label{eq:vmm}
p(z_i \mid \beta, \sigma) = \int_0^\infty \phi\!\left(z_i \mid \mu_z + \kappa_z\,\omega_i^{-1},\; \sigma^2\omega_i^{-1}\right) dP(\omega_i)\,,
\end{equation}
where $\phi(\cdot \mid m, s^2)$ denotes the Gaussian density with mean $m$ and variance $s^2$, then the latent variable $\omega_i$ simultaneously controls both the conditional mean (via $\kappa_z\,\omega_i^{-1}$) and the conditional variance (via $\sigma^2\omega_i^{-1}$).  The parameter $\kappa_z$ handles asymmetric losses; $\kappa_z = 0$ recovers the classical normal variance mixture.  An analogous representation holds for the penalty:
\begin{equation}\label{eq:vmm_penalty}
p(\beta_j \mid \tau) = \int_0^\infty \phi\!\left(\beta_j \mid \mu_\beta + \kappa_\beta\,\lambda_j^{-1},\; \tau^2\lambda_j^{-1}\right) dP(\lambda_j)\,.
\end{equation}

This representation exists for a broad class of losses: squared error, absolute error, check loss (quantile regression), hinge loss (SVMs), and logistic loss all admit representations of the form~\eqref{eq:vmm} with specific mixing distributions $P(\omega_i)$; see \citet{polson2011data} and \citet{polson2013data} for a catalog.  \citet{carlin1992monte} demonstrated that this structure extends to nonnormal and nonlinear state-space models: introducing latent scales $\lambda_t$ and $\omega_t$ into the state and observation equations $x_t | x_{t-1}, \lambda_t \sim N(F_t x_{t-1}, \lambda_t \Sigma)$ and $y_t | x_t, \omega_t \sim N(H_t x_t, \omega_t \Upsilon)$ gives complete conditionals $x_t | \cdot \sim N(B_t b_t, B_t)$ where the precision
\begin{equation}\label{eq:carlin_precision}
B_t^{-1} = \frac{\Sigma^{-1}}{\lambda_t} + \frac{H_t^\top \Upsilon^{-1} H_t}{\omega_t} + \frac{F_{t+1}^\top \Sigma^{-1} F_{t+1}}{\lambda_{t+1}}
\end{equation}
is a weighted sum controlled by the latent variables---exactly the weighted least squares structure of our M-step~\eqref{eq:mstep}.  The conditional for $\omega_t$ is $\mathrm{GIG}(\frac{1}{2}, 1, (y_t - H_t x_t)^2/\tau^2)$, the same gradient-to-weight identity as Proposition~\ref{prop:gradient}.  Moreover, the complete conditional for model parameters (e.g., $F$) has the form $N(B_F b_F, B_F)$ with $B_F^{-1} = \sigma^{-2}\sum_{t} x_{t-1}^2/\lambda_t + \sigma_F^{-2}$---the weighted sufficient statistics $\sum x_{t-1}^2/\lambda_t$ are precomputed at each Gibbs sweep, foreshadowing the complete sufficient statistics structure exploited by \citet{polson2011data} for fast model exploration.

The two fundamental mixing families---the generalized inverse Gaussian (GIG) and the P\'olya family---generate these losses through specific integral identities.  The GIG mixture gives
\begin{equation}\label{eq:gig_identity}
 \frac{\alpha^2 - \kappa^2}{2 \alpha} e^{-\alpha | \theta - \mu | + \kappa ( \theta - \mu ) }
 = \int_0^\infty \phi \left (\theta \mid   \mu + \kappa \omega ,  \omega \right )
 \, p_{\mathrm{gig}} \big( \omega \mid 1 , 0 , \sqrt{ \alpha^2 - \kappa^2} \big)  \, d \omega\,,
\end{equation}
while the P\'olya mixture gives
\begin{equation}\label{eq:polya_identity}
\frac{1}{B( \alpha , \kappa )}
  \frac{ e^{ \alpha (\theta- \mu) } }
  { ( 1 + e^{\theta-\mu} )^{ 2 (\alpha - \kappa) } }
 = \int_0^\infty \phi \left (\theta \mid \mu + \kappa \omega , \omega \right )
\, p_{\mathrm{pol}} ( \omega \mid \alpha , \alpha - 2 \kappa) \, d \omega\,,
\end{equation}
where $p_{\mathrm{gig}}$ and $p_{\mathrm{pol}}$ denote the GIG and P\'olya densities, respectively.  In the improper limits, these identities specialize to three cases that generate the most common machine learning objectives:
\begin{align}
a^{-1} \exp \left\{ - 2 c^{-1} \max ( a\theta , 0 ) \right\} &= \int_0^\infty \phi ( \theta \mid - a v , c v ) \, d v\,,  \label{eq:svm_identity}\\
 c^{-1} \exp \left\{ - 2 c^{-1} \rho_q (  \theta ) \right\}  &= \int_0^\infty \phi ( \theta \mid  - ( 2 q -1 ) v , c v )\, e^{ -  2 q ( 1 - q ) v } \, d v\,,  \label{eq:check_identity} \\
\left( 1 + e^{ \theta-\mu } \right)^{-1} & =
\int_0^\infty \phi(\theta \mid \mu - v/2, v) \,
p_{\mathrm{pol}}( v \mid 0,1 ) \, dv\,,  \label{eq:logit_identity}
\end{align}
where $\rho_q(\theta) = \frac{1}{2}|\theta| + (q - \frac{1}{2})\theta$ is the check loss.  Identity~\eqref{eq:svm_identity} generates the hinge loss (SVMs), \eqref{eq:check_identity} generates the check loss (quantile and Lasso regression), and \eqref{eq:logit_identity} generates the logistic link.  With GIG and P\'olya mixing alone, one obtains all of
\[
\theta^2, \quad |\theta|, \quad \max(\theta, 0), \quad \rho_q(\theta), \quad \frac{1}{1 + e^{-\theta}}, \quad \frac{1}{(1 + e^{-\theta})^r}\,,
\]
corresponding to ridge, Lasso, SVM, check loss, logit, and multinomial models, respectively.  More general mixing families (e.g., stable distributions) generate bridge penalties $|\theta|^\alpha$.  Each identity provides the E-step weight $\hat{\omega}$ via the gradient-to-weight identity of Proposition~\ref{prop:gradient}, so the entire SM-EM algorithm is determined once the mixing family is specified.

\subsection{Latent Variables as Adaptive Weights}

Conditional on $(\omega, \lambda)$, the complete-data log posterior is
\begin{equation}\label{eq:complete_data}
\log p(\beta \mid \omega, \lambda) = c_0 - \frac{1}{2\sigma^2}\sum_{i=1}^n \omega_i\!\left(z_i - \mu_z - \kappa_z\omega_i^{-1}\right)^2 - \frac{1}{2\tau^2}\sum_{j=1}^p \lambda_j\!\left(\beta_j - \mu_\beta - \kappa_\beta\lambda_j^{-1}\right)^2,
\end{equation}
which is a heteroscedastic Gaussian model: a weighted least squares problem with observation weights $\omega_i/\sigma^2$ and parameter weights $\lambda_j/\tau^2$.

The conditional expectations of the latent variables are determined by the gradients of $f$ and $g$.

\begin{proposition}[Gradient-to-weight identity; Proposition~1 of \citet{polson2013data}]\label{prop:gradient}
The conditional first moments satisfy
\begin{align}
(z_i - \mu_z)\,\hat{\omega}_i &= \kappa_z + \sigma^2 f'(z_i)\,, \label{eq:omega_update}\\
(\beta_j - \mu_\beta)\,\hat{\lambda}_j &= \kappa_\beta + \tau^2 g'(\beta_j)\,. \label{eq:lambda_update}
\end{align}
\end{proposition}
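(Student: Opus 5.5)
The approach is to differentiate the mixture representation~\eqref{eq:vmm} with respect to $z_i$ under the integral sign and read off a posterior expectation. Write $p(z_i) = e^{-f(z_i)}$ (up to a constant) so that $-f'(z_i) = \partial_{z} \log p(z_i) = p'(z_i)/p(z_i)$. The integrand is the Gaussian kernel $\phi(z \mid \mu_z + \kappa_z\omega^{-1}, \sigma^2\omega^{-1})$, whose $z$-derivative is
\[
\partial_z \phi = -\frac{\omega}{\sigma^2}\bigl(z - \mu_z - \kappa_z\omega^{-1}\bigr)\,\phi = -\frac{1}{\sigma^2}\bigl(\omega(z-\mu_z) - \kappa_z\bigr)\,\phi\,.
\]
The $\omega^{-1}$ in the mean is exactly cancelled by the precision $\omega/\sigma^2$, which is why the result is linear in $\omega$.

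First, I will justify interchanging $\partial_z$ and $\int dP(\omega)$ by dominated convergence, locally uniformly in $z$. The derivative is bounded by $C(1+\omega)\phi$, so I need $\int \omega\,\phi\,dP < \infty$, which is the finiteness of $\hat\omega_i$ itself. Second, I will integrate the display against $dP(\omega)$ and divide by $p(z_i)$. Since $\phi\,dP(\omega)/p(z_i)$ is the conditional density of $\omega_i$ given $z_i$ (and $\beta$), this gives
\[
-f'(z_i) = -\frac{1}{\sigma^2}\Bigl((z_i-\mu_z)\,E[\omega_i \mid z_i] - \kappa_z\Bigr)\,,
\]
which rearranges to~\eqref{eq:omega_update} with $\hat\omega_i = E[\omega_i\mid \beta, z_i]$. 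Third, the same computation applied to~\eqref{eq:vmm_penalty}, with $\beta_j$ in place of $z_i$, $\tau^2$ in place of $\sigma^2$, and $g$ as the negative log-prior, gives~\eqref{eq:lambda_update}. If the penalty is written as $g(\beta_j/\tau)$, the factor $1/\tau$ is absorbed by reading $g'$ as the derivative in $\beta_j$. The convention here must match the one used in~\eqref{eq:objective}.

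The main obstacle is the interchange of derivative and integral for the improper or heavy-tailed mixing measures, such as the improper limits in~\eqref{eq:svm_identity}--\eqref{eq:check_identity}. It is also delicate at points where $f$ is not differentiable, such as the kink of the hinge or check loss. At those points I will state the identity for $z_i \neq \mu_z$, where $f$ is smooth, and otherwise interpret $f'$ as the appropriate one-sided or sub-derivative. For the interchange I would first try local Gaussian domination: for $z$ in a compact interval, $\omega\,\phi(z\mid\cdot)$ is dominated by a function integrable against $dP$ whenever the posterior mean is finite. If that fails, I would fall back on verifying the identity directly from the closed forms~\eqref{eq:gig_identity}--\eqref{eq:polya_identity}.
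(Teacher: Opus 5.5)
The paper gives no proof of its own here (it cites Proposition~1 of \citet{polson2013data}; its appendix only recovers the same weight as a half-quadratic envelope minimizer by concave duality), and your argument is the standard one behind the cited result, so it is correct and follows the intended route: differentiate the mixture identity under the integral sign and divide by the marginal, i.e.\ Fisher's identity. One tightening: finiteness of $\hat\omega_i$ at the single point $z_i$ does not by itself give a locally uniform dominating function, but for $z_i\neq\mu_z$ and a small neighbourhood $U$ of $z_i$ bounded away from $\mu_z$, the factor $e^{-\omega(z-\mu_z)^2/(2\sigma^2)}$ absorbs the extra $1+\omega$, so $\sup_{z\in U}|\partial_z\phi(z\mid\mu_z+\kappa_z\omega^{-1},\sigma^2\omega^{-1})|$ is bounded by a constant times the same kernel evaluated at any $z_1\neq\mu_z$ with $|z_1-\mu_z|<\inf_{z\in U}|z-\mu_z|$; its $dP$-integral is $p(z_1)<\infty$ by the representation itself, so the interchange holds even for improper mixing measures and you never need the fallback to the closed forms.
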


The weight $\hat{\omega}_i$ assigned to observation $i$ is the ratio of the gradient $f'(z_i)$ to the residual $z_i - \mu_z$.  The algorithm never requires the full mixing distribution $P(\omega_i)$---only these first moments.

\begin{proposition}[Weighted least squares M-step; Proposition~2 of \citet{polson2013data}]\label{prop:mstep}
Given $\hat{\Omega} = \mathrm{diag}(\hat{\omega}_1, \ldots, \hat{\omega}_n)$ and $\hat{\Lambda} = \mathrm{diag}(\hat{\lambda}_1, \ldots, \hat{\lambda}_p)$, the M-step update is
\begin{equation}\label{eq:mstep}
\hat{\beta} = \left(\tau^{-2}\hat{\Lambda} + X^\top \hat{\Omega}\, X\right)^{-1}\!\left(X^\top(\mu_z\hat{\omega} + \kappa_z\mathbf{1}) + \tau^{-2}(\mu_\beta\hat{\lambda} + \kappa_\beta\mathbf{1})\right).
\end{equation}
\end{proposition}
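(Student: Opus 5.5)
The plan is to derive \eqref{eq:mstep} as the maximizer of the EM surrogate $Q(\beta \mid \beta^{(t)}) = \mathbb{E}[\log p(\beta \mid \omega,\lambda) \mid \beta^{(t)}, z]$ built from the complete-data log posterior \eqref{eq:complete_data}. I will use the classification convention $z = X\beta$, so $z_i = x_i^\top\beta$, and absorb $\sigma^2$ into the observation weights (equivalently set $\sigma^2 = 1$), which is what \eqref{eq:mstep} implicitly does. For the regression convention $z_i = y_i - x_i^\top\beta$ the argument is identical up to the sign of $X$ and a shift of the working response.

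\textbf{Step 1: reduce the E-step to first moments.} I will expand each squared term in \eqref{eq:complete_data}:
\[
\omega_i\bigl(z_i-\mu_z-\kappa_z\omega_i^{-1}\bigr)^2 = \omega_i (z_i-\mu_z)^2 - 2\kappa_z (z_i-\mu_z) + \kappa_z^2\omega_i^{-1},
\]
and analogously for $\lambda_j(\beta_j-\mu_\beta-\kappa_\beta\lambda_j^{-1})^2$. The terms $\kappa_z^2\omega_i^{-1}$ and $\kappa_\beta^2\lambda_j^{-1}$ do not involve $\beta$, so they contribute only constants to $Q$. The remaining terms are linear in $\omega_i$ and $\lambda_j$. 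Hence taking the conditional expectation simply replaces $\omega_i,\lambda_j$ by $\hat\omega_i,\hat\lambda_j$, which are the first moments characterized in Proposition~\ref{prop:gradient}. This confirms the remark that only first moments of the mixing distribution are needed.

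\textbf{Step 2: stationarity and solving.} Up to constants,
\[
Q(\beta) = -\tfrac12\sum_i \bigl[\hat\omega_i (x_i^\top\beta-\mu_z)^2 - 2\kappa_z x_i^\top\beta\bigr] - \tfrac{1}{2\tau^2}\sum_j\bigl[\hat\lambda_j(\beta_j-\mu_\beta)^2 - 2\kappa_\beta\beta_j\bigr].
\]
This is a quadratic in $\beta$. Setting its gradient to zero gives
\[
-X^\top\hat\Omega(X\beta-\mu_z\mathbf 1) + \kappa_z X^\top\mathbf 1 - \tau^{-2}\bigl[\hat\Lambda(\beta-\mu_\beta\mathbf 1) - \kappa_\beta\mathbf 1\bigr] = 0 .
\]
Rearranging, and using $\hat\Omega\mathbf 1 = \hat\omega$ and $\hat\Lambda\mathbf 1 = \hat\lambda$, yields the normal equations
\[
(\tau^{-2}\hat\Lambda + X^\top\hat\Omega X)\,\beta = X^\top(\mu_z\hat\omega+\kappa_z\mathbf 1) + \tau^{-2}(\mu_\beta\hat\lambda+\kappa_\beta\mathbf 1).
\]
Inverting the matrix on the left gives \eqref{eq:mstep}.

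\textbf{Step 3: the stationary point is the unique maximizer.} The Hessian of $Q$ is $-(\tau^{-2}\hat\Lambda + X^\top\hat\Omega X)$. The mixing variables are positive, so $\hat\omega_i>0$ and $\hat\lambda_j>0$ as expectations of positive random variables. Then $\tau^{-2}\hat\Lambda$ is positive definite and $X^\top\hat\Omega X$ is positive semidefinite, so the sum is positive definite. This makes $Q$ strictly concave, so the stationary point is the unique maximizer and the inverse in \eqref{eq:mstep} exists.

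I expect the main obstacle to be bookkeeping rather than any analytic difficulty. Three points need care: the sign and centering convention for the working response $z_i$ (whether $\mu_z$ carries $y_i$), where $\sigma^2$ is absorbed, and the possibility that $\hat\lambda_j$ degenerates (for instance $\hat\lambda_j\to\infty$ at $\beta_j=\mu_\beta$ for Lasso-type penalties). In the degenerate case I would state the result under positivity and finiteness of the weights, and handle the boundary by continuity.
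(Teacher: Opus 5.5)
Your proposal is correct and takes the route the paper indicates. The paper does not prove this result itself and cites Polson and Scott; it frames the M-step exactly as you do, as maximizing the expected complete-data log posterior \eqref{eq:complete_data}, whose $\beta$-dependent part is linear in $(\omega,\lambda)$ so only the first moments of Proposition~\ref{prop:gradient} enter, and then solving the weighted least-squares normal equations. Your reading of the conventions is also right: the displayed formula corresponds to $z = X\beta$ with $\sigma^2 = 1$, and if you keep $\sigma^2$, the factor $\sigma^{-2}$ multiplies the whole observation block, including the $\kappa_z X^\top\mathbf{1}$ term, so rescaling the weights alone does not absorb it.
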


The E-step computes $\hat{\omega}_i$ and $\hat{\lambda}_j$ via \eqref{eq:omega_update}--\eqref{eq:lambda_update}; the M-step solves the resulting weighted least squares~\eqref{eq:mstep}.  No step size is specified by the user---the latent variables determine the effective step size at each iteration.

\section{Replacing Robbins--Monro with Scale Mixtures}\label{sec:smrm}

\subsection{The Latent Variable Step Size}

The Robbins--Monro update~\eqref{eq:rm} can be written as $\theta_{t+1} = \theta_t - \alpha_t\,\nabla \ell(\theta_t)$, where $\alpha_t$ is a scalar step size.  In the scale mixture framework, the M-step update~\eqref{eq:mstep} is a regularized Newton step with an iteration-dependent, observation-specific precision matrix $\hat{\Omega}_t$.  Each diagonal entry $\hat{\omega}_{i,t}$ adapts to the local curvature of the loss at observation $i$.

To see the analogy precisely, consider the scalar case.  The E-step gives $\hat{\omega}_t = f'(z_t)/(z_t - \mu_z)$ and the M-step gives
\begin{equation}\label{eq:scalar_update}
\beta_{t+1} = \frac{\hat{\omega}_t\,z_t + \tau^{-2}\hat{\lambda}_t\,\mu_\beta}{\hat{\omega}_t + \tau^{-2}\hat{\lambda}_t}\,,
\end{equation}
which can be rewritten as
\begin{equation}\label{eq:rm_form}
\beta_{t+1} = \beta_t - \underbrace{\frac{1}{\hat{\omega}_t + \tau^{-2}\hat{\lambda}_t}}_{\text{effective step size}}\left(\hat{\omega}_t(\beta_t - z_t) + \tau^{-2}\hat{\lambda}_t(\beta_t - \mu_\beta)\right).
\end{equation}
The effective step size $(\hat{\omega}_t + \tau^{-2}\hat{\lambda}_t)^{-1}$ is adaptive: when $\hat{\omega}_t$ is large (strong gradient signal), the step is small and precise; when $\hat{\omega}_t$ is small (weak signal), the prior dominates and regularizes.  This is a Robbins--Monro recursion where the step size is determined by the model.

\subsection{Connection to Adam}\label{sec:adam_connection}

Adam's per-parameter scaling $1/(\sqrt{\hat{v}_t} + \epsilon)$ approximates the inverse square root of the second moment of the gradient.  In the scale mixture framework, the role of $\hat{v}_t$ is played by $\hat{\omega}_t$.

\begin{remark}[Adam as a heuristic approximation]
Consider the scale mixture update with $\hat{\omega}_i = f'(z_i)/(z_i - \mu_z)$ and the Adam update with $\hat{v}_t = \beta_2\hat{v}_{t-1} + (1-\beta_2)g_t^2$.  Both produce per-parameter scaling of the gradient.  The difference is the source of the scaling:
\begin{itemize}
\item Adam estimates $\mathbb{E}[g_t^2]$ via an exponential moving average---a heuristic that treats all loss functions identically.
\item The scale mixture approach computes $\hat{\omega}_t$ from the \emph{structure} of the loss function.  For logistic loss, $\hat{\omega}_i = \tanh(z_i/2)/(2z_i)$; for quantile loss, $\hat{\omega}_i = |y_i - x_i^\top\beta|^{-1}$.  Each loss function yields its own natural curvature estimate, with no hyperparameters ($\beta_1, \beta_2, \epsilon$) to tune.
\end{itemize}
\end{remark}

\subsection{Connection to AdamW}\label{sec:adamw_connection}

AdamW uses a fixed weight decay $\lambda_{\mathrm{wd}}$ applied to all parameters.  In the scale mixture framework, weight decay arises from the penalty term $g(\beta_j)$ via the latent variable $\hat{\lambda}_j$.

From Proposition~\ref{prop:gradient}, the adaptive weight decay for parameter $j$ is
\begin{equation}\label{eq:adaptive_wd}
\hat{\lambda}_j = \frac{\kappa_\beta + \tau^2 g'(\beta_j)}{\beta_j - \mu_\beta}\,.
\end{equation}
For an $L^2$ penalty $g(\beta) = \beta^2/2$, this gives $\hat{\lambda}_j = \tau^2$, recovering fixed weight decay.  For non-quadratic penalties, $\hat{\lambda}_j$ is parameter-specific and iteration-dependent:
\begin{itemize}
\item \emph{Lasso} ($g(\beta) = |\beta|$): $\hat{\lambda}_j = \tau^2/|\beta_j|$, which applies strong decay to small coefficients and weak decay to large ones.
\item \emph{Double-Pareto} ($g(\beta) = \gamma\log(1 + |\beta|/a)$): $\hat{\lambda}_j = \gamma\tau^2/\!\bigl((a + |\beta_j|)\,|\beta_j|\bigr)$, which interpolates between $L^1$ and $L^0$ behavior.
\item \emph{Horseshoe}: heavy-tailed $P(\lambda_j)$ allows $\hat{\lambda}_j$ to be arbitrarily small for large $|\beta_j|$, providing near-zero shrinkage for signals.
\end{itemize}
For the Lasso and double-Pareto, $\hat{\lambda}_j \to \infty$ as $\beta_j \to 0$, signaling that the coefficient should be removed from the active set.  In practice, components with $|\beta_j| < \delta$ (for a small threshold $\delta$, e.g.\ $10^{-8}$) are set to zero and excluded from the M-step linear system; see also \citet{polson2013data}, \S3.2.

In each case, the scale mixture framework replaces AdamW's single hyperparameter $\lambda_{\mathrm{wd}}$ with an adaptive, per-parameter weight decay derived from the prior.

\subsection{The Algorithm}

Combining the developments above yields the Scale Mixture EM algorithm for minimizing~\eqref{eq:objective}.

\begin{algorithm}[H]
\caption{Scale Mixture EM (SM-EM)}
\label{alg:smem}
\begin{algorithmic}[1]
\REQUIRE Data $(X, y)$, initial $\beta^{(0)}$, prior parameters $(\tau, \mu_\beta, \kappa_\beta)$, loss parameters $(\sigma, \mu_z, \kappa_z)$
\FOR{$t = 0, 1, 2, \ldots$}
\STATE \textbf{E-step (adaptive weights):}
\STATE \quad Compute residuals $z_i^{(t)} = y_i - x_i^\top \beta^{(t)}$ (or $z_i = x_i^\top\beta^{(t)}$ for classification)
\STATE \quad $\hat{\omega}_i^{(t)} = (\kappa_z + \sigma^2 f'(z_i^{(t)}))/(z_i^{(t)} - \mu_z)$ \hfill [observation weights]
\STATE \quad $\hat{\lambda}_j^{(t)} = (\kappa_\beta + \tau^2 g'(\beta_j^{(t)}))/(\beta_j^{(t)} - \mu_\beta)$ \hfill [parameter weights / adaptive decay]
\STATE \textbf{M-step (weighted least squares):}
\STATE \quad $\hat{\Omega}^{(t)} = \mathrm{diag}(\hat{\omega}_1^{(t)}, \ldots, \hat{\omega}_n^{(t)})$
\STATE \quad $\hat{\Lambda}^{(t)} = \mathrm{diag}(\hat{\lambda}_1^{(t)}, \ldots, \hat{\lambda}_p^{(t)})$
\STATE \quad $\beta^{(t+1)} = (\tau^{-2}\hat{\Lambda}^{(t)} + X^\top \hat{\Omega}^{(t)} X)^{-1}(X^\top(\mu_z\hat{\omega}^{(t)} + \kappa_z\mathbf{1}) + \tau^{-2}(\mu_\beta\hat{\lambda}^{(t)} + \kappa_\beta\mathbf{1}))$
\ENDFOR
\end{algorithmic}
\end{algorithm}

For large-scale problems, the M-step can use a mini-batch $E_t \subset \{1, \ldots, n\}$, yielding a stochastic~EM that maintains the Robbins--Monro structure.  When the M-step is partitioned into blocks---updating subsets of $\beta$ conditionally---the algorithm becomes an ECM procedure, which retains nonincreasing objective values while simplifying each conditional maximization step.

\subsection{Convergence}

The EM monotonicity property \citep{dempster1977maximum} guarantees that the objective decreases at each iteration: $L(\beta^{(t+1)}) \leq L(\beta^{(t)})$.  This holds for the full-batch algorithm without any step size condition.  For the stochastic variant, convergence follows from the online EM theory of \citet{cappe2009online}, which establishes that the stochastic EM iterates converge to stationary points of the Kullback--Leibler divergence at the rate of the maximum likelihood estimator.

The connection to classical Robbins--Monro is structural: the SM-EM iterates have the same form as a Robbins--Monro recursion, but the convergence mechanism differs.

\begin{proposition}[Robbins--Monro form; scalar case]\label{prop:rm_convergence}
In the scalar case ($p = 1$), define $\alpha_t \defeq (\hat{\omega}_t + \tau^{-2}\hat{\lambda}_t)^{-1}$.  The scale mixture iterates~\eqref{eq:rm_form} have the form
\[
\beta_{t+1} = \beta_t - \alpha_t\,H(\beta_t)\,,
\]
where $H(\beta^\star) = 0$ at the MAP estimate.  The step sizes satisfy:
\begin{enumerate}
\item $\alpha_t > 0$ for all $t$;
\item $\alpha_t \to (\omega^\star + \tau^{-2}\lambda^\star)^{-1} > 0$ as $\beta_t \to \beta^\star$.
\end{enumerate}
\end{proposition}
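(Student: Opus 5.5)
The plan is to start from the scalar M-step \eqref{eq:scalar_update} and rearrange it algebraically into the form \eqref{eq:rm_form}. Set
\[
H(\beta_t) \defeq \hat{\omega}_t(\beta_t - z_t) + \tau^{-2}\hat{\lambda}_t(\beta_t - \mu_\beta).
\]
Subtracting $\beta_t$ from \eqref{eq:scalar_update} and putting everything over the common denominator $\hat{\omega}_t + \tau^{-2}\hat{\lambda}_t$ gives $\beta_{t+1} = \beta_t - \alpha_t H(\beta_t)$ directly. I would note that the weights $\hat{\omega}_t,\hat{\lambda}_t$ are themselves functions of $\beta_t$ through the E-step (Proposition~\ref{prop:gradient}), so $H$ is a genuine function of the current iterate.

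Next I would substitute the gradient-to-weight identities \eqref{eq:omega_update}--\eqref{eq:lambda_update} into $H$. In the simplest setting ($\kappa_z=\kappa_\beta=0$, residual $z$ measured relative to $\beta$), the products $\hat{\omega}(\text{residual})$ and $\hat{\lambda}(\beta-\mu_\beta)$ collapse to $\sigma^2 f'$ and $\tau^2 g'$. Hence $H(\beta)$ is, up to the sign and scaling conventions of \eqref{eq:objective}, the derivative of $L$, and it vanishes at the MAP estimate $\beta^\star$ by first-order optimality. An alternative route avoids this bookkeeping: $\beta^\star$ is a fixed point of the EM map, i.e.\ $\beta_{t+1}=\beta_t=\beta^\star$ (EM stationary points coincide with stationary points of $L$). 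Then $\alpha_t H(\beta^\star)=0$, and since $\alpha_t>0$ this forces $H(\beta^\star)=0$. I would use the fixed-point argument because it sidesteps the sign conventions for $\kappa$ and $\mu$.

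For item~1, positivity of $\alpha_t$ reduces to $\hat{\omega}_t>0$, $\hat{\lambda}_t>0$ and $\tau^2>0$. These hold because $\hat{\omega}_t,\hat{\lambda}_t$ are conditional expectations of positive mixing variables in \eqref{eq:vmm}--\eqref{eq:vmm_penalty}, so they are strictly positive whenever finite. The same conclusion can be checked explicitly in the logistic case, where $\tanh(z/2)/(2z)>0$ for all $z$, with limit $1/4$ at $z=0$. For item~2, I would assume $f'$ and $g'$ are continuous, so the E-step maps $\beta\mapsto\hat{\omega}(\beta)$ and $\beta\mapsto\hat{\lambda}(\beta)$ are continuous at $\beta^\star$. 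Then $\alpha_t$ converges to $(\omega^\star+\tau^{-2}\lambda^\star)^{-1}$, which is positive and finite as long as $\omega^\star,\lambda^\star$ are finite and positive.

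The main obstacle is the degenerate behaviour of the weights. At a zero residual or at $\beta^\star=\mu_\beta$, the ratios in Proposition~\ref{prop:gradient} are $0/0$. For penalties like the Lasso, $\hat{\lambda}_j\to\infty$ as $\beta_j\to0$, in which case $\alpha_t\to0$ rather than to a positive limit. I would handle this by defining the weights at removable singularities by continuity (valid for smooth losses such as logistic). I would also state item~2 under the explicit hypothesis that $\omega^\star,\lambda^\star\in(0,\infty)$, noting that the sparse case $\beta^\star=0$ is exactly the one handled by removing the coordinate from the active set.
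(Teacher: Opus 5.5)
Your proposal is correct and follows the paper's route. You rearrange the scalar M-step \eqref{eq:scalar_update} into $\beta_{t+1}=\beta_t-\alpha_t H(\beta_t)$ with the same $H$, get positivity of $\alpha_t$ from the strictly positive support of the latent variables, and get the limit from continuity of the E-step map. The differences are minor. You obtain $H(\beta^\star)=0$ from the EM fixed-point property rather than from the vanishing gradient at the MAP, though both rest on the gradient-to-weight identity. You argue continuity through $f',g'$ with removable singularities, where the paper uses dominated convergence, which avoids the $0/0$ issue at a zero residual. You also state explicitly the hypothesis $\omega^\star,\lambda^\star\in(0,\infty)$, which the paper uses tacitly and which fails for the Lasso when $\beta^\star=0$.
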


\begin{proof}
For part~(1), both $\hat{\omega}_t$ and $\hat{\lambda}_t$ are conditional expectations of latent variables with strictly positive support.  For the P\'olya--Gamma distribution, $\mathbb{E}[\omega \mid z] = \tanh(z/2)/(2z) > 0$ for all $z \neq 0$, so $\alpha_t = (\hat{\omega}_t + \tau^{-2}\hat{\lambda}_t)^{-1} > 0$.

For part~(2), the map $\beta \mapsto \hat{\omega}(\beta)$ is continuous by the dominated convergence theorem.  If $\beta_t \to \beta^\star$, then $\hat{\omega}_t \to \omega^\star > 0$ and $\hat{\lambda}_t \to \lambda^\star > 0$, giving $\alpha_t \to (\omega^\star + \tau^{-2}\lambda^\star)^{-1} > 0$.

For the recursion form, the M-step update~\eqref{eq:scalar_update} gives $(\hat{\omega}_t + \tau^{-2}\hat{\lambda}_t)\beta_{t+1} = \hat{\omega}_t z_t + \tau^{-2}\hat{\lambda}_t \mu_\beta$.  Subtracting $(\hat{\omega}_t + \tau^{-2}\hat{\lambda}_t)\beta_t$ and dividing yields $\beta_{t+1} - \beta_t = -\alpha_t H(\beta_t)$ with $H(\beta) = \hat{\omega}(\beta - z) + \tau^{-2}\hat{\lambda}(\beta - \mu_\beta)$.  At $\beta^\star$, the gradient of the regularized log-likelihood vanishes, so $H(\beta^\star) = 0$.
\end{proof}

\begin{remark}
Since $\alpha_t \to c > 0$, the classical Robbins--Monro conditions $\sum_t \alpha_t^2 < \infty$ do \emph{not} hold for the full-batch algorithm.  This is not a deficiency: full-batch SM-EM converges by the EM monotonicity property (each iteration decreases the objective $L(\beta)$), not by stochastic approximation theory.  The RM form is structural---it reveals that the latent variables play the role of adaptive step sizes---rather than the convergence mechanism.  For a stochastic variant using mini-batch sufficient statistics with a diminishing gain sequence $\eta_t$ satisfying $\sum_t \eta_t = \infty$, $\sum_t \eta_t^2 < \infty$, Polyak--Juditsky averaging yields the optimal $O(1/T)$ rate \citep{polyak1992acceleration, kushner2003stochastic}.
\end{remark}

\section{Proximal Operators and MAP Estimation}\label{sec:proximal}

\subsection{Convex Analysis Preliminaries}

The connection between scale mixtures and proximal algorithms rests on several facts from convex analysis (see, e.g., \citealt{boyd2004convex}, \S3.3; \citealt{rockafellar1998variational}).

The \emph{convex conjugate} of a function $f : \mathcal{R}^d \to \overline{\mathcal{R}}$ is $f^\star(\lambda) = \sup_x\{\lambda^\top x - f(x)\}$.  As $f^\star$ is the pointwise supremum of a family of affine functions, it is convex even when $f$ is not.

\begin{lemma}[Fenchel--Moreau duality; \citealt{boyd2004convex}, \S3.3.2]\label{lem:fenchel}
Let $f : \mathcal{R}^d \to \overline{\mathcal{R}}$ be a closed convex function.  Then there exists a convex function $f^\star(\lambda)$ such that
\begin{align}
f(x) &= \sup_\lambda\left\{\lambda^\top x - f^\star(\lambda)\right\}, \label{eq:fenchel_f}\\
f^\star(\lambda) &= \sup_x\left\{\lambda^\top x - f(x)\right\}. \label{eq:fenchel_fstar}
\end{align}
If $f(x)$ is instead a concave function, then $\sup$ is replaced by $\inf$ in both equations.  Any maximizing value of $\lambda$ in~\eqref{eq:fenchel_f} satisfies $\hat{\lambda} \in \partial f(x)$; if $f$ is differentiable, $\hat{\lambda} = \nabla f(x)$.
\end{lemma}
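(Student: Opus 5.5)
The plan is the standard Fenchel--Moreau argument, with $f^\star$ taken to be the conjugate defined just before the lemma. With that choice, \eqref{eq:fenchel_fstar} holds by definition. Convexity of $f^\star$ is automatic, since it is a pointwise supremum of the affine maps $\lambda \mapsto \lambda^\top x - f(x)$. What remains is to prove the biconjugate identity \eqref{eq:fenchel_f}, i.e.\ $f^{\star\star}=f$, together with the subgradient characterisation of the maximiser and the concave variant.

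\textbf{Step 1 (Fenchel--Young).} For all $x,\lambda$ we have $f(x)+f^\star(\lambda)\ge \lambda^\top x$, directly from the definition of $f^\star$. Taking the supremum over $\lambda$ gives $f^{\star\star}(x)\le f(x)$.

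\textbf{Step 2 (reverse inequality).} This is the step I expect to be the main obstacle. I would use the separating hyperplane theorem on the epigraph, which is a closed convex set because $f$ is closed and convex.
\begin{itemize}
\item Fix $x_0$ and any $r<f(x_0)$. Then $(x_0,r)\notin\mathrm{epi}\,f$, so strict separation gives $(a,b)$ with $a^\top x + b t > a^\top x_0 + b r$ for all $(x,t)\in\mathrm{epi}\,f$.
\item Since $t$ may be increased without bound, $b\ge 0$.
\item If $b>0$, normalise to $b=1$ and set $\lambda=-a$. Rearranging gives $\lambda^\top x - f(x)\le \lambda^\top x_0 - r$ for all $x$, i.e.\ $f^\star(\lambda)\le \lambda^\top x_0 - r$. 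Hence $f^{\star\star}(x_0)\ge r$.
\item The vertical case $b=0$ (which can occur when $x_0$ lies on the boundary of, or outside, $\mathrm{dom} f$) is handled as usual. Take a non-vertical affine minorant, which exists for a proper closed convex function by separating any point strictly below the graph at a domain point. Add to it a large multiple of the vertical separator; the result is a non-vertical affine minorant exceeding $r$ at $x_0$.
\end{itemize}
Letting $r\uparrow f(x_0)$ gives $f^{\star\star}(x_0)\ge f(x_0)$. Improper closed convex functions, namely $f\equiv\pm\infty$, are trivial and I would dispose of them separately.

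\textbf{Step 3 (maximiser is a subgradient).} Suppose $\hat\lambda$ attains the supremum in \eqref{eq:fenchel_f}, so that $f(x)=\hat\lambda^\top x - f^\star(\hat\lambda)$. Combining this with Fenchel--Young at every $y$ gives $f(y)\ge \hat\lambda^\top y - f^\star(\hat\lambda) = f(x)+\hat\lambda^\top(y-x)$. This is exactly $\hat\lambda\in\partial f(x)$. If $f$ is differentiable at $x$, then $\partial f(x)=\{\nabla f(x)\}$, so $\hat\lambda=\nabla f(x)$.

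\textbf{Step 4 (concave case).} Apply Steps 1--3 to the closed convex function $-f$, using the conjugate $(-f)^\star(\mu)=\sup_x\{\mu^\top x+f(x)\}$. Then substitute $\mu=-\lambda$ and define the concave conjugate $f_\star(\lambda)=-(-f)^\star(-\lambda)=\inf_x\{\lambda^\top x - f(x)\}$. This turns $\sup$ into $\inf$ in both \eqref{eq:fenchel_f} and \eqref{eq:fenchel_fstar}. The minimiser then lies in the superdifferential of $f$ and equals $\nabla f(x)$ when $f$ is differentiable.
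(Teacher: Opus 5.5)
The paper does not prove this lemma; it imports it from Boyd and Vandenberghe (\S3.3.2). Your proposal therefore supplies an argument rather than competing with one. What you give is the standard proof of biconjugation: Fenchel--Young for $f^{\star\star}\le f$, separation of the closed epigraph for $f^{\star\star}\ge f$, Fenchel--Young again to show that an attaining $\hat\lambda$ lies in $\partial f(x)$, and reduction of the concave case to $-f$. It is correct. Its advantage over the bare citation is that it exposes hypotheses the statement leaves implicit. The first is properness. Your disposal of improper functions as $f\equiv\pm\infty$ relies on Rockafellar's convention for closedness. Under the bare closed-epigraph definition, the function equal to $-\infty$ on a closed convex set and $+\infty$ elsewhere is closed, yet has $f^{\star\star}\equiv-\infty$. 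The second is that the concave case needs upper semicontinuity of $f$, and the resulting $f_\star$ is concave rather than convex.

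One step needs sharper wording. In the vertical case, the separation must put its level strictly between point and set, i.e.\ $a^\top x\ge\gamma>a^\top x_0$ for all $x\in\mathrm{dom}\,f$. As you wrote the inequality, with the level at $x_0$ itself, the function $a^\top(x_0-x)$ vanishes at $x_0$. Adding multiples of it then cannot lift a minorant above $r$ there. Take instead $\gamma$ from strict separation of a point from a closed convex set (e.g.\ via projection). Then $x\mapsto h(x)+M(\gamma-a^\top x)$ works for any affine minorant $h$ and large $M$. Also, $b=0$ can occur only when $f(x_0)=+\infty$, not at boundary points of the domain. For $x_0\in\mathrm{dom}\,f$, boundary included, the point $(x_0,f(x_0))$ lies in $\mathrm{epi}\,f$, which rules out a vertical separator.
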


One function $g(x)$ is said to \emph{majorize} $f(x)$ at $x_0$ if $g(x_0) = f(x_0)$ and $g(x) \geq f(x)$ for all $x \neq x_0$.  A function $f(x)$ is \emph{completely monotone} on $A \subset \mathcal{R}$ if its derivatives alternate in sign: $(-1)^k f^{(k)}(x) \geq 0$ for all $k = 0, 1, 2, \ldots$ and all $x \in A$.

For any function $f(x)$, the \emph{Moreau envelope} $E_\gamma f(x)$ and \emph{proximal mapping} $\prox_{\gamma f}(x)$ for parameter $\gamma > 0$ are
\begin{align}
E_\gamma f(x) &= \inf_z\left\{f(z) + \frac{1}{2\gamma}\|z - x\|_2^2\right\} \leq f(x)\,, \label{eq:moreau_env}\\
\prox_{\gamma f}(x) &= \arg\min_z\left\{f(z) + \frac{1}{2\gamma}\|z - x\|_2^2\right\}. \label{eq:prox_def}
\end{align}
The Moreau envelope is a regularized version of $f$: it approximates $f$ from below and has the same set of minimizing values as $f$ \citep{rockafellar1998variational}.  The proximal mapping returns the value that solves the minimization problem defined by the Moreau envelope---it balances two goals: minimizing $f$ and staying near $x$.  When $f(x) = \iota_C(x)$ is the indicator of a convex set $C$, $\prox_f(x) = \arg\min_{z \in C}\|x - z\|_2^2$ reduces to Euclidean projection.

\subsection{Envelopes and Half-Quadratic Representations}

The scale mixture representation has a dual interpretation via envelopes \citep{polson2016mixtures}.  A penalty function $\phi(\beta)$ with completely monotone derivative of $\phi(\sqrt{2x})$ admits the half-quadratic representation
\begin{equation}\label{eq:envelope_scale}
\phi(\beta) = \inf_{\omega \geq 0}\left\{\frac{\omega}{2}\beta^2 - \psi^\star(\omega)\right\},
\end{equation}
where $\psi^\star$ is the convex conjugate and the infimum is attained at $\hat{\omega} = \phi'(\beta)/\beta$ \citep{geman1992constrained}.  Alternatively, the Geman--Yang location representation \citep{geman1995nonlinear} gives
\begin{equation}\label{eq:envelope_location}
\phi(\beta) = \inf_{\lambda}\left\{\frac{1}{2}(\beta - \lambda)^2 + \psi(\lambda)\right\},
\end{equation}
where $\hat{\lambda} = \prox_\psi(\beta)$ is the proximal operator of $\psi$ \citep{moreau1965proximite}.

The marginalizing prior (mixture) and profiling prior (envelope) yield the same function $\phi$ but through different operations: integration versus optimization.  \citet{polson2016mixtures} call these ``hierarchical duals.''  For algorithmic purposes, the envelope representation leads to MM algorithms where each iteration minimizes a quadratic majorizer, and the mixture representation leads to EM algorithms where each iteration computes a conditional expectation.  Both yield the same iterates.

\subsection{Proximal Gradient as Scale Mixture Iteration}

For the composite objective $F(\beta) = f(\beta) + \phi(\beta)$ where $\nabla f$ is $L$-Lipschitz, the proximal gradient method iterates
\begin{equation}\label{eq:prox_grad}
\beta_{t+1} = \prox_{\gamma\phi}\!\left(\beta_t - \gamma\nabla f(\beta_t)\right), \quad \gamma = 1/L\,.
\end{equation}
This is equivalent to the scale mixture iteration with a Gaussian location envelope of $f$ \citep{polson2016mixtures}: the latent variable update $\hat{\lambda}_{t+1} = \gamma^{-1}\beta_t - \nabla f(\beta_t)$ encodes the gradient step, and the $\beta$-update is the proximal (shrinkage) step.

The MAP estimate satisfies the fixed-point equation
\begin{equation}\label{eq:fixed_point}
\beta^\star = \prox_{\gamma\phi}\!\left(\beta^\star - \gamma\nabla f(\beta^\star)\right) = (I + \gamma\,\partial\phi)^{-1}(I - \gamma\nabla f)\,\beta^\star\,,
\end{equation}
which decomposes into a forward operator $(I - \gamma\nabla f)$ (gradient step) and a backward operator $(I + \gamma\partial\phi)^{-1}$ (proximal/prior step).

Nesterov acceleration \citep{nesterov1983method} replaces $\beta_t$ in \eqref{eq:prox_grad} with an extrapolated point $v_t = \beta_t + \theta_t(\theta_{t-1}^{-1} - 1)(\beta_t - \beta_{t-1})$, achieving $O(1/t^2)$ convergence for convex objectives \citep{beck2009fast}.  This introduces non-monotone behavior: the objective can increase at individual iterates.  The base (non-accelerated) scale mixture EM avoids this by construction---the EM property guarantees monotonic descent.  When Nesterov extrapolation is applied to SM-EM, the monotonicity guarantee is lost (as for any accelerated method), but the adaptive curvature estimates from the E-step still determine per-iteration scaling without user-specified step sizes.

\subsection{Improved Step Sizes via Scale Mixtures}

Standard proximal gradient uses the global Lipschitz constant $L$ to set $\gamma = 1/L$, which is conservative when the curvature varies.  The scale mixture approach replaces this with observation-specific curvature estimates.

For the logistic loss $f(z) = \log(1 + e^z)$, the half-quadratic representation gives \citep{polson2013bayesian}
\begin{equation}\label{eq:logistic_hq}
\log\cosh(z/2) = \inf_{\omega \geq 0}\left\{\frac{\omega}{2}z^2 - \psi^\star(\omega)\right\}, \quad \hat{\omega}(z) = \frac{\tanh(z/2)}{2z}\,.
\end{equation}
The curvature estimate $\hat{\omega}(z) = \tanh(z/2)/(2z)$ is bounded between $0$ and $1/4$ and varies smoothly with $z$.  Compared to the global bound $L = 1/4$ (the maximum of $w(z)(1-w(z))$ where $w(z) = 1/(1+e^{-z})$), the local estimate $\hat{\omega}(z)$ is tighter for observations far from the decision boundary, leading to larger effective step sizes where the loss is flatter.  This improves on the standard Lipschitz bound and provides the scale mixture analog of the diagonal scaling in Adam.

\section{Empirical Comparison}\label{sec:empirical}

We compare scale mixture EM (SM-EM) with Adam, SGD with momentum, and the proximal Robbins--Monro (PRM) method of \citet{toulis2021proximal}, with and without Nesterov acceleration.  The experiments evaluate two questions: (i) whether SM-EM reaches lower loss than Adam without optimizer-schedule tuning, and (ii) whether the sufficient-statistics structure of the M-step reduces runtime for regularization paths.  All experiments use standardized synthetic logistic-regression data.  SM-EM uses P\'olya--Gamma augmentation with prior precision $\tau^{-2} = 0.01$; Adam uses $(\beta_1, \beta_2, \epsilon) = (0.9, 0.999, 10^{-8})$, batch size 256; PRM uses implicit updates with Hessian, batch size 20.  SM-EM has no learning rate; for Adam and SGD we report both default and grid-tuned learning rates.  All experiments use fixed random seeds for reproducibility.  Since SM-EM is a full-batch, deterministic algorithm, its convergence trajectory is fully determined by the data realization; the stochastic methods (Adam, SGD, PRM) are also deterministic for a given seed due to fixed mini-batch orderings.  Results are from single data realizations; the comparison targets algorithmic convergence behavior rather than statistical variability across datasets.  We omit IRLS (Fisher scoring) as a baseline because it solves the same $p \times p$ linear system per iteration as SM-EM; the comparison of interest is whether SM-EM's model-derived weights outperform heuristic adaptive methods (Adam, SGD) that avoid the full Hessian.

\subsection{Convergence and Learning Rate Sensitivity}

Figure~\ref{fig:convergence} shows the negative log-likelihood versus iteration for a moderately conditioned problem ($\mathrm{cond}(X^\top X) \approx 50$).  SM-EM reaches the lowest loss in this setting ($0.089$) without learning-rate tuning.  Adam with the default $\alpha = 10^{-3}$ has not converged after 80 epochs; even with tuned $\alpha = 10^{-2}$, Adam converges more slowly (final loss $0.110$).  PRM converges but requires a tuned step-size schedule.  Figure~\ref{fig:sensitivity} sweeps Adam's learning rate over $[10^{-4}, 10^{-0.5}]$: performance varies by about a factor of six, while SM-EM matches the best result in this sweep without optimizer tuning.

\begin{figure}[t]
\centering
\includegraphics[width=0.75\textwidth]{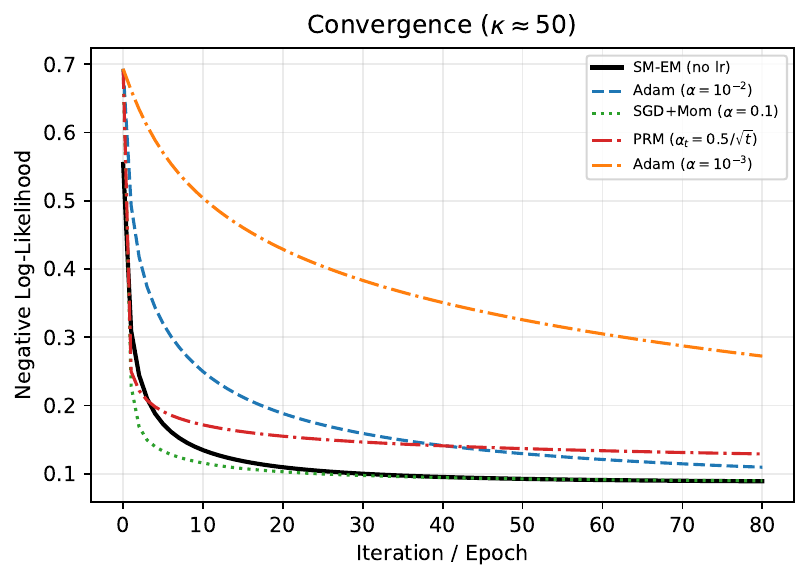}
\caption{Convergence on moderately conditioned logistic regression ($\mathrm{cond} \approx 50$).  SM-EM (no learning rate) achieves the lowest loss.  PRM (implicit SGD) converges but requires step size tuning.}
\label{fig:convergence}
\end{figure}

\begin{figure}[t]
\centering
\includegraphics[width=0.75\textwidth]{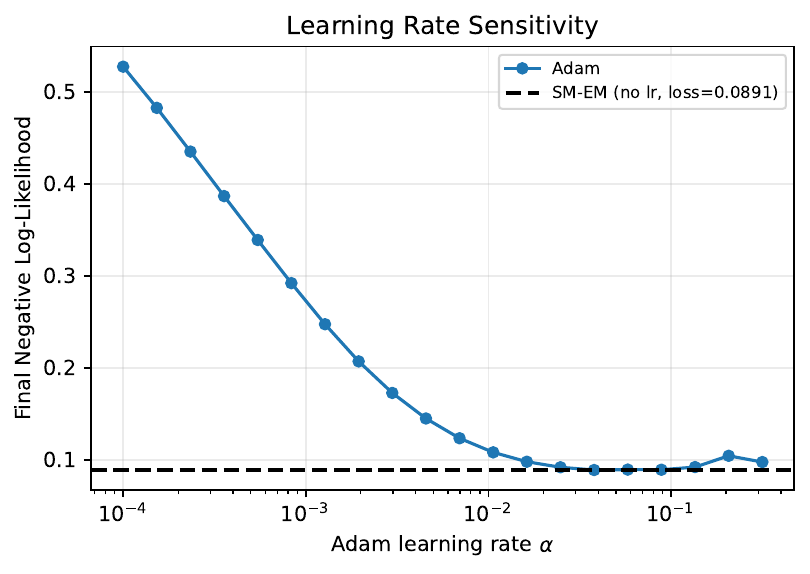}
\caption{Adam's final loss as a function of learning rate $\alpha$.  SM-EM (dashed line) matches the best Adam result in this sweep without learning-rate tuning.}
\label{fig:sensitivity}
\end{figure}

\subsection{Effect of Conditioning}

Figure~\ref{fig:ill_conditioned} compares SM-EM, PRM, and Adam on a well-conditioned ($\mathrm{cond} \approx 50$) and an ill-conditioned ($\mathrm{cond} \approx 500$) problem.  SM-EM performs best in these experiments and is less sensitive to conditioning, because the P\'olya--Gamma weights $\hat{\omega}_i$ provide local curvature estimates that adapt by observation.  Adam's fixed learning rate becomes increasingly suboptimal as the condition number grows.  PRM, which uses Hessian information, handles ill-conditioning better than Adam but still requires a step-size schedule.

\begin{figure}[t]
\centering
\includegraphics[width=\textwidth]{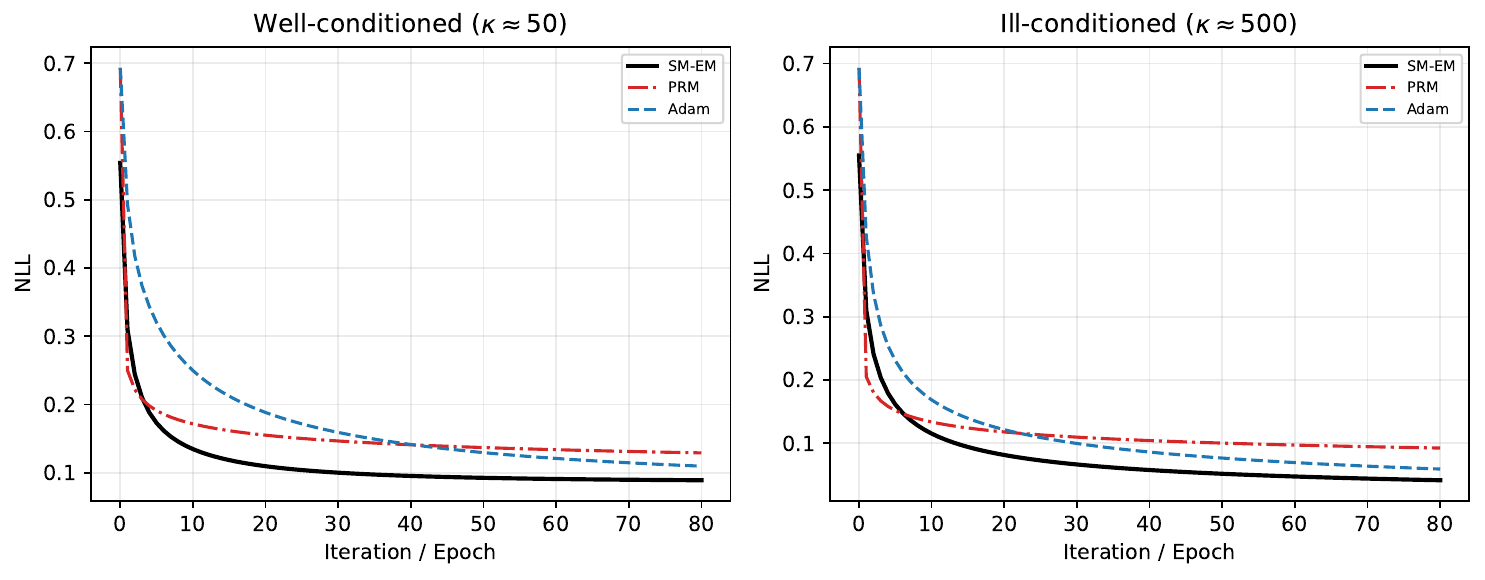}
\caption{Effect of conditioning.  Left: $\mathrm{cond} \approx 50$.  Right: $\mathrm{cond} \approx 500$.  SM-EM converges robustly in both cases; Adam and PRM degrade as conditioning worsens.}
\label{fig:ill_conditioned}
\end{figure}

\subsection{Nesterov Acceleration}

Since each full-batch SM-EM iteration is deterministic, Nesterov extrapolation can be applied without variance reduction.  Figure~\ref{fig:nesterov} shows the effect.  On the well-conditioned problem ($\mathrm{cond} \approx 50$), Nesterov provides modest improvement for both SM-EM and PRM.  On the ill-conditioned problem ($\mathrm{cond} \approx 500$), the effect is substantial: SM-EM+Nesterov achieves a 55\% lower final loss than SM-EM alone (NLL $0.019$ vs.\ $0.042$), and PRM+Nesterov improves by 54\% over standard PRM.  The observed speedup is consistent with the $O(\sqrt{L/\mu})$ dependence predicted by Nesterov theory for fixed-step methods (Appendix~\ref{app:convergence}), though the formal rate guarantee for adaptive curvature remains open.

A separate experiment isolating SM-EM and PRM with Nesterov on the same ill-conditioned design ($\mathrm{cond} \approx 450$, $n=2000$, $p=20$) gives similar conclusions (Table~\ref{tab:nesterov}).  SM-EM+Nesterov reduces final NLL by 32\% relative to standard SM-EM ($0.030$ vs.\ $0.044$), while PRM+Nesterov achieves a 53\% reduction ($0.063$ vs.\ $0.135$).  A plausible mechanism is that SM-EM's closed-form M-step already exploits current curvature information, so Nesterov extrapolation mainly improves trajectory efficiency.  PRM also benefits, although mini-batch noise can reduce look-ahead accuracy.

\begin{table}[t]
\centering
\caption{Nesterov acceleration on an ill-conditioned problem ($\mathrm{cond} \approx 450$, $n{=}2000$, $p{=}20$, 100 iterations).}\label{tab:nesterov}
\begin{tabular}{lccc}
\hline
\textbf{Method} & NLL & Accuracy & Nesterov gain \\
\hline
SM-EM             & 0.044 & 98.0\% & ---   \\
SM-EM + Nesterov  & \textbf{0.030} & 97.8\% & 32\%  \\
PRM               & 0.135 & 96.8\% & ---   \\
PRM + Nesterov    & 0.063 & 98.0\% & 53\%  \\
\hline
\end{tabular}
\end{table}

\begin{figure}[t]
\centering
\includegraphics[width=\textwidth]{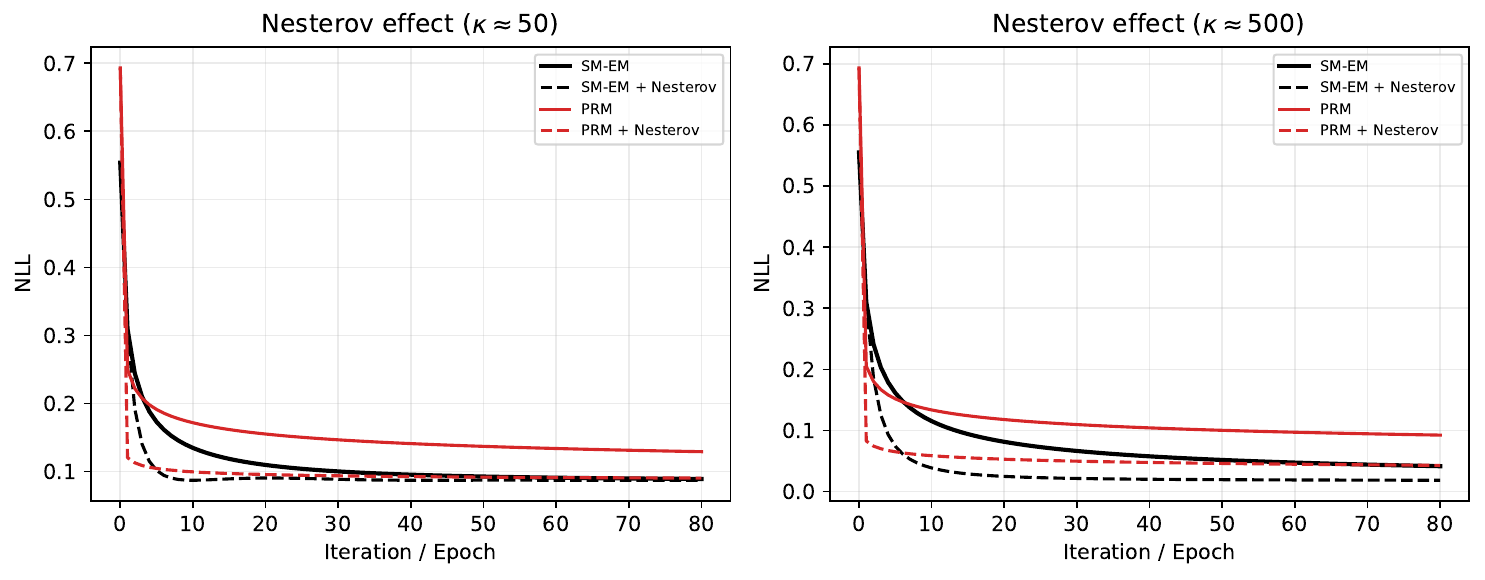}
\caption{Nesterov acceleration.  Left: $\mathrm{cond} \approx 50$ (modest effect).  Right: $\mathrm{cond} \approx 500$ (55\% for SM-EM, 54\% for PRM).  Table~\ref{tab:nesterov} reports a separate ill-conditioned experiment with 32\% (SM-EM) and 53\% (PRM) gains.}
\label{fig:nesterov}
\end{figure}

\subsection{Scaling with Dimension}

Table~\ref{tab:highdim} and Figure~\ref{fig:high_dim} report a sweep over $p \in \{20, 50, 100, 200, 500\}$ with $n = 5{,}000$, $\mathrm{cond} = 500$, and 80 iterations.  For Adam we report both the default learning rate $\alpha = 10^{-3}$ (no tuning) and the best learning rate selected by grid search over six values; for SGD+Momentum we report the grid-search best.  SM-EM and SM-EM+Nesterov use no learning rate at all.

Two patterns emerge.  First, SM-EM without learning-rate tuning outperforms Adam with default tuning by $4$--$6\times$ in NLL across all dimensions: Adam at $\alpha = 10^{-3}$ converges slowly on these ill-conditioned problems, while SM-EM reaches lower losses automatically.  Second, SM-EM+Nesterov outperforms grid-tuned Adam at every value of $p$---by $4\times$ at $p = 200$ (NLL $0.007$ vs.\ $0.028$) and $13\times$ at $p = 500$ (NLL $0.0015$ vs.\ $0.019$).  The gap grows with $p$ because SM-EM's M-step uses the full curvature matrix $X^\top \hat{\Omega} X$, whereas Adam's diagonal second-moment estimate $\hat{v}_t$ omits off-diagonal Hessian structure.

The cost per iteration is higher: the $O(p^3)$ linear solve makes SM-EM slower in wall-clock time for $p \geq 200$ (Table~\ref{tab:highdim}).  However, the total practitioner cost is lower: SM-EM runs once, while Adam requires a grid search over learning rates---each point on the grid takes as long as a single SM-EM run.  The Halton Monte Carlo method described below further reduces SM-EM's per-iteration cost for large $p$.

\begin{table}[t]
\centering
\caption{Final NLL and wall-clock time vs.\ dimension ($n{=}5000$, $\mathrm{cond}{=}500$, 80 iterations).  Adam$^*$ uses grid-tuned learning rate.}\label{tab:highdim}
\begin{tabular}{rcccccccc}
\hline
 & \multicolumn{2}{c}{SM-EM} & \multicolumn{2}{c}{SM-EM+Nest} & \multicolumn{2}{c}{Adam$^*$} & \multicolumn{2}{c}{Adam ($10^{-3}$)} \\
\cmidrule(lr){2-3}\cmidrule(lr){4-5}\cmidrule(lr){6-7}\cmidrule(lr){8-9}
$p$ & NLL & Time & NLL & Time & NLL & Time & NLL & Time \\
\hline
20  & 0.050 & 0.03 & \textbf{0.036} & 0.03 & 0.040 & 0.04 & 0.233 & 0.04 \\
50  & 0.043 & 0.04 & \textbf{0.017} & 0.04 & 0.026 & 0.06 & 0.167 & 0.06 \\
100 & 0.042 & 0.10 & \textbf{0.019} & 0.10 & 0.029 & 0.12 & 0.157 & 0.12 \\
200 & 0.036 & 0.34 & \textbf{0.007} & 0.34 & 0.028 & 0.10 & 0.137 & 0.10 \\
500 & 0.021 & 1.29 & \textbf{0.0015} & 1.28 & 0.019 & 0.30 & 0.116 & 0.25 \\
\hline
\end{tabular}
\end{table}

\begin{figure}[t]
\centering
\includegraphics[width=\textwidth]{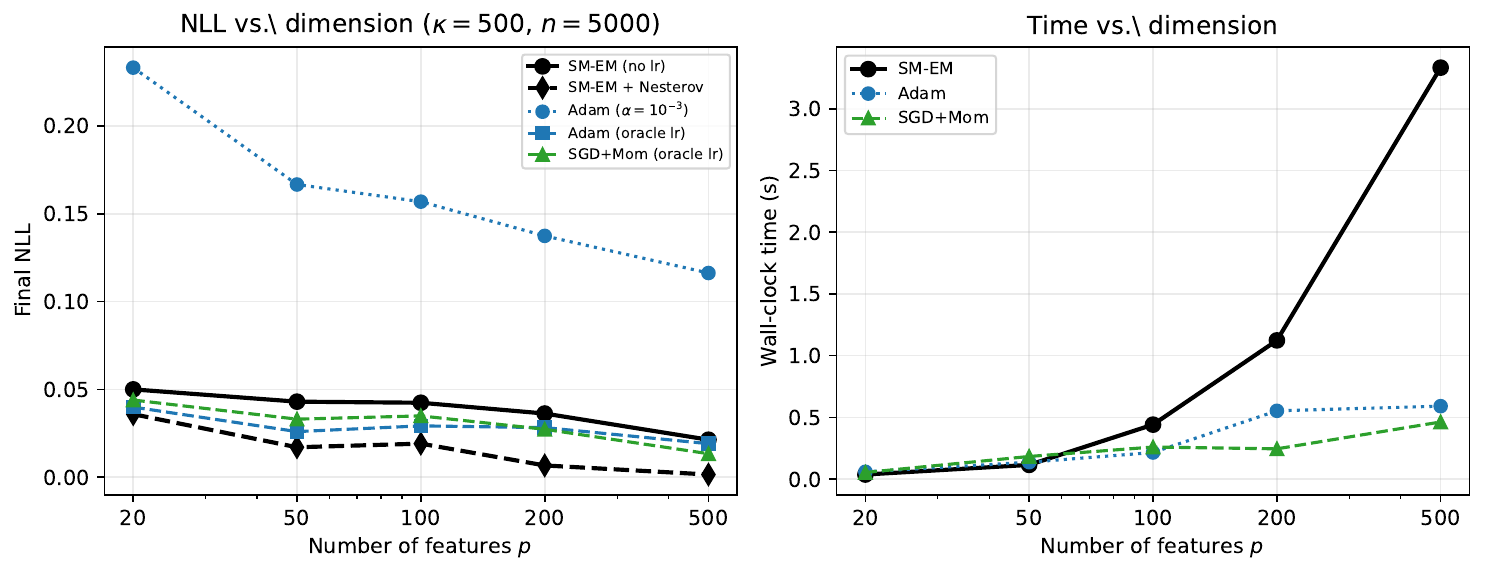}
\caption{Left: Final NLL vs.\ $p$ ($\mathrm{cond}{=}500$, $n{=}5000$).  SM-EM+Nesterov (no learning-rate tuning) reaches lower NLL than grid-tuned Adam at every dimension tested; the gap widens with $p$.  Adam at its default learning rate ($\alpha{=}10^{-3}$) has not converged in 80 epochs.  Right: wall-clock time.  SM-EM's $O(p^3)$ solve is the dominant cost for $p \geq 200$.}
\label{fig:high_dim}
\end{figure}

\subsection{Computational Cost and Scalability}

Table~\ref{tab:results} summarizes final losses and wall-clock times.  SM-EM is roughly $20\times$ faster than PRM per iteration: the M-step solves a $p \times p$ linear system ($O(p^2 n + p^3)$), while PRM must solve $n/b$ implicit Newton steps per epoch, each involving a Hessian computation ($O(p^2 b + p^3)$ per batch).  Adam and SGD are faster per epoch due to simpler updates, but require careful learning rate tuning.  With grid-tuned step sizes, SGD+Momentum matches SM-EM's loss at $\mathrm{cond} \approx 500$ (Table~\ref{tab:results}); SM-EM's advantage is that it achieves comparable loss automatically, with no step size to select.

\paragraph{Halton Monte Carlo for the M-step.}
For large $p$, the $O(p^3)$ direct solve in the M-step~\eqref{eq:mstep} can become a bottleneck.  \citet{halton1994sequential} introduced a sequential Monte Carlo method for linear systems that exploits the Neumann series representation: if the M-step system $A\beta = b$ is written in fixed-point form $\beta = c + H\beta$ with spectral radius $\rho_H < 1$, Halton's method iteratively estimates and corrects the residual, achieving geometric convergence at rate $\rho_H^N$ \citep{halton2008practical}.  A subsampled variant reduces the per-iteration cost from $O(p^2 s)$ to $O(ps)$ for $s$ correction steps \citep{polson2026fast}.  In a scaling study on diagonally dominant systems at $p = 5{,}000$, the sampled sequential method is approximately $24\times$ faster than Gauss--Seidel \citep{polson2026fast}; this factor is implementation-dependent and illustrates asymptotic scaling rather than a portable constant.

\paragraph{Complete sufficient statistics as offline preconditioning.}
A further computational advantage arises from the structure of the M-step, and it is best understood as a form of preconditioning.  As noted by \citet{polson2011data} in the context of SVMs, the complete-data sufficient statistics $X^\top \hat{\Omega} X$ and $X^\top \kappa$ decompose as sums over observations: $X^\top \hat{\Omega} X = \sum_{i=1}^n \hat{\omega}_i\,x_i x_i^\top$ and $X^\top \kappa = \sum_{i=1}^n \kappa_i\,x_i$.  The rank-one outer products $G_i = x_i x_i^\top$ do not depend on $\beta$ or the iteration index $t$; they can be precomputed once offline at cost $O(np^2)$ and cached.  At each E-step, updating $\hat{\Omega}$ requires only $O(n)$ scalar operations (evaluating $\hat{\omega}_i$ for each observation), after which the M-step matrix $A = \tau^{-2}\hat{\Lambda} + \sum_i \hat{\omega}_i G_i$ is assembled by a weighted sum of the cached outer products at cost $O(np^2)$---the same as a single matrix--vector product, with no redundant feature-matrix operations.

The analogy to preconditioning is close.  In iterative linear algebra, a preconditioner $P$ is an approximation to the system matrix that is computed once (offline) and applied at each iteration to accelerate convergence: instead of solving $Ax = b$, one solves the better-conditioned system $P^{-1}Ax = P^{-1}b$.  The cached outer products $\{G_i\}_{i=1}^n$ play an analogous role for SM-EM.  The ``preconditioner'' here is the full set of data-dependent building blocks from which every iteration's system matrix is assembled; only the scalar weights $\hat{\omega}_i^{(t)}$ change across iterations.  Standard preconditioners (incomplete Cholesky, Jacobi, etc.) approximate curvature once and hold it fixed.  In SM-EM, recomputing $\hat{\omega}_i$ at each E-step makes the effective preconditioner $\hat{\Omega}^{(t)}$ iteration dependent.  This is the computational counterpart of the statistical observation that latent variables provide local curvature estimates (Section~\ref{sec:adam_connection}).

The offline/online decomposition also explains why SM-EM is cheap to warm-start.  When the model changes---a new observation arrives, or the regularization parameter $\tau$ is updated---only the $O(n)$ weights $\hat{\omega}_i$ and the $O(p)$ weights $\hat{\lambda}_j$ need to be refreshed; the cached $\{G_i\}$ remain valid.  By contrast, gradient-based methods must discard their momentum buffers ($m_t$, $v_t$ in Adam) and effectively restart, because those buffers encode a running average of stale gradients.

For regularized models with adaptive penalties, a dimension-reduction effect provides further savings.  As coefficients $\beta_j \to 0$ under Lasso-type penalties, the adaptive weight $\hat{\lambda}_j = \tau^2/|\beta_j| \to \infty$, effectively removing coordinate $j$ from the active set.  The M-step matrix then operates on the reduced set of active variables $\gamma = \{j : |\beta_j| > \delta\}$, and the sufficient statistics restricted to $\gamma$ are submatrices of the precomputed quantities: $(X^\top \hat{\Omega} X)_\gamma = X_\gamma^\top \hat{\Omega} X_\gamma$ \citep{polson2011data}.  Early iterations (large active set) are most expensive; later iterations (small active set) are cheap.  This makes model exploration---evaluating many candidate penalty levels $\tau$---very fast, since the sufficient statistics are shared across all penalty values.

\paragraph{Regularization path via amortized E-step.}
The shared-sufficient-statistics structure is especially powerful when solving a regularization path---evaluating penalty values $\tau_1^{-2} > \tau_2^{-2} > \cdots > \tau_K^{-2}$ to select a prior precision.  At each EM iteration, the E-step computes $\hat{\Omega}$ once at cost $O(np^2)$; the resulting matrix $X^\top \hat{\Omega} X$ is then shared across all $K$ penalty values, each requiring only an $O(p^3)$ linear solve.  The total cost per EM sweep is $O(np^2 + K p^3)$, compared with $O(K\,np^2)$ if each penalty required a separate E-step, or $O(K \cdot T \cdot np / b)$ for Adam running $T$ epochs with batch size $b$ on each of $K$ trajectories.

Figure~\ref{fig:regpath} demonstrates this on a 40-point regularization grid ($n = 5{,}000$, $p = 200$, $\mathrm{cond} = 200$, 30 EM iterations).  The amortized SM-EM path completes in $0.5$~s---$8\times$ faster than running individual SM-EM fits, $3.5\times$ faster than Adam, and $10\times$ faster than grid-tuned Adam (which must try four learning rates at each $\tau^{-2}$).  The amortized path also achieves a $5.5\times$ lower mean NLL than Adam ($0.057$ vs.\ $0.315$), because SM-EM benefits from the full Hessian at every $\tau^{-2}$ while Adam's diagonal approximation degrades on the ill-conditioned design.

Figure~\ref{fig:activeset} shows the active-set shrinkage effect with adaptive Lasso penalties ($p = 500$, $\tau = 1$).  The active set decreases from $500$ to $282$ variables over $80$ EM iterations, reducing the per-iteration solve time by $1.5\times$ as the effective dimension falls.

\begin{figure}[t]
\centering
\includegraphics[width=\textwidth]{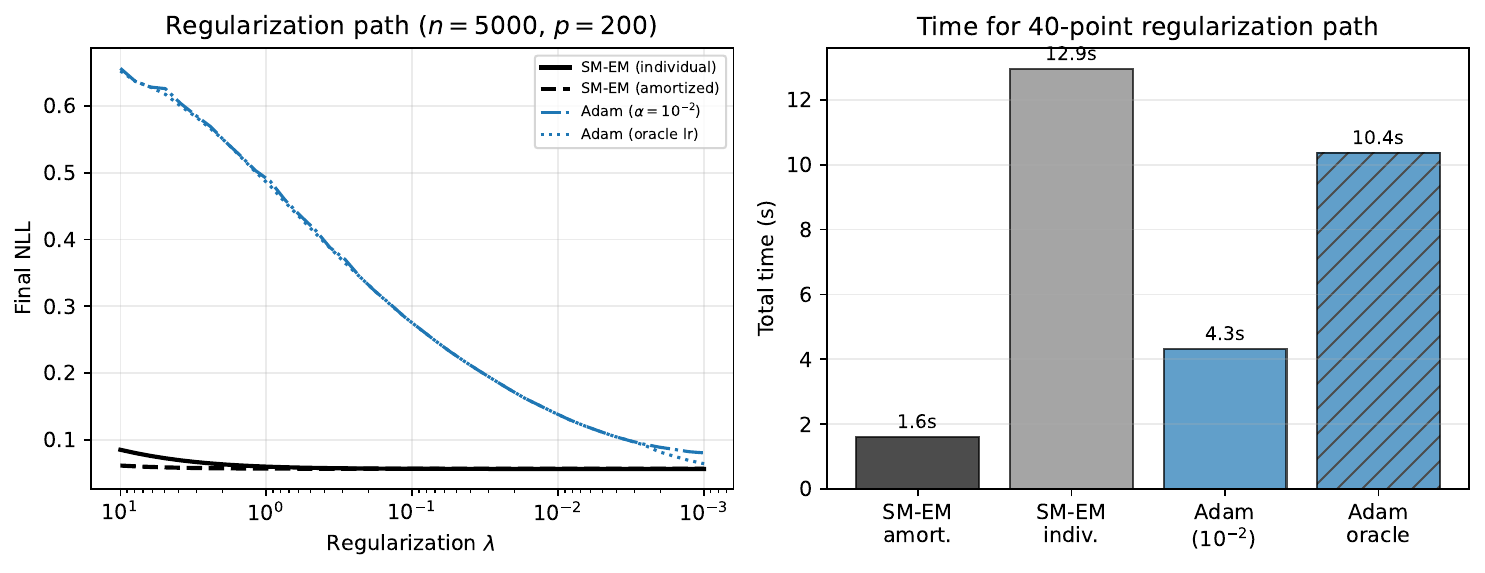}
\caption{Regularization path ($n{=}5000$, $p{=}200$, 40 penalty values).  Left: SM-EM with amortized E-step reaches the same NLL as individual SM-EM fits; Adam has not converged at most penalty values.  Right: wall-clock time comparison; amortized SM-EM completes the path in $0.5$~s.}
\label{fig:regpath}
\end{figure}

\begin{figure}[t]
\centering
\includegraphics[width=\textwidth]{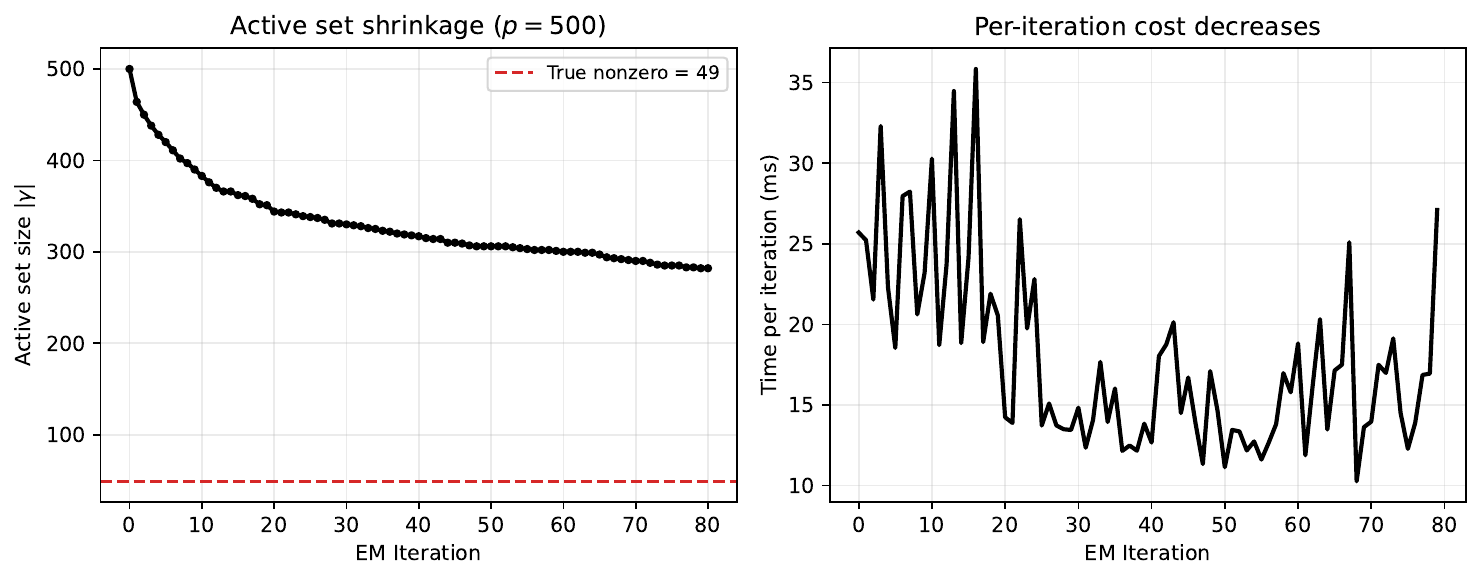}
\caption{Active-set shrinkage with adaptive Lasso ($p{=}500$, $\tau{=}1$).  Left: the active set $|\gamma|$ decreases as coefficients are driven to zero.  Right: per-iteration wall-clock time decreases proportionally, since the M-step operates on the reduced $|\gamma| \times |\gamma|$ system.}
\label{fig:activeset}
\end{figure}

\begin{table}[t]
\centering
\caption{Final negative log-likelihood and wall-clock time (80 iterations/epochs).  $^*$Grid-tuned learning rate.}\label{tab:results}
\begin{tabular}{lcccc}
\hline
 & \multicolumn{2}{c}{$\mathrm{cond} \approx 50$} & \multicolumn{2}{c}{$\mathrm{cond} \approx 500$} \\
\cmidrule(lr){2-3}\cmidrule(lr){4-5}
\textbf{Method} & NLL & Time (s) & NLL & Time (s) \\
\hline
SM-EM (no lr)         & 0.089 & 0.02 & 0.042 & 0.02 \\
SM-EM + Nesterov      & 0.087 & 0.02 & \textbf{0.019} & 0.02 \\
PRM ($\alpha_t = 0.5/\sqrt{1+t}$) & 0.129 & 0.42 & 0.093 & 0.42 \\
PRM + Nesterov        & 0.090 & 0.45 & 0.043 & 0.46 \\
Adam ($\alpha = 10^{-2}$) & 0.110 & 0.04 & 0.059 & 0.04 \\
SGD + Momentum$^*$    & 0.090 & 0.04 & 0.041 & 0.04 \\
\hline
\end{tabular}
\end{table}

\subsection{Adaptive Weights}

Figure~\ref{fig:weights} displays the SM-EM weights $\hat{\omega}_i = \tanh(z_i/2)/(2z_i)$ for five randomly chosen observations across EM iterations.  All weights start near the upper bound $1/4$ and decrease as the model classifies observations with confidence: well-separated observations ($|z_i|$ large) receive small weights, while ambiguous observations near the decision boundary retain larger weights.  The right panel shows Adam's second-moment estimate $v_t^{(j)}$: these rise then fall, driven by gradient noise rather than by the structure of the loss.

\begin{figure}[t]
\centering
\includegraphics[width=\textwidth]{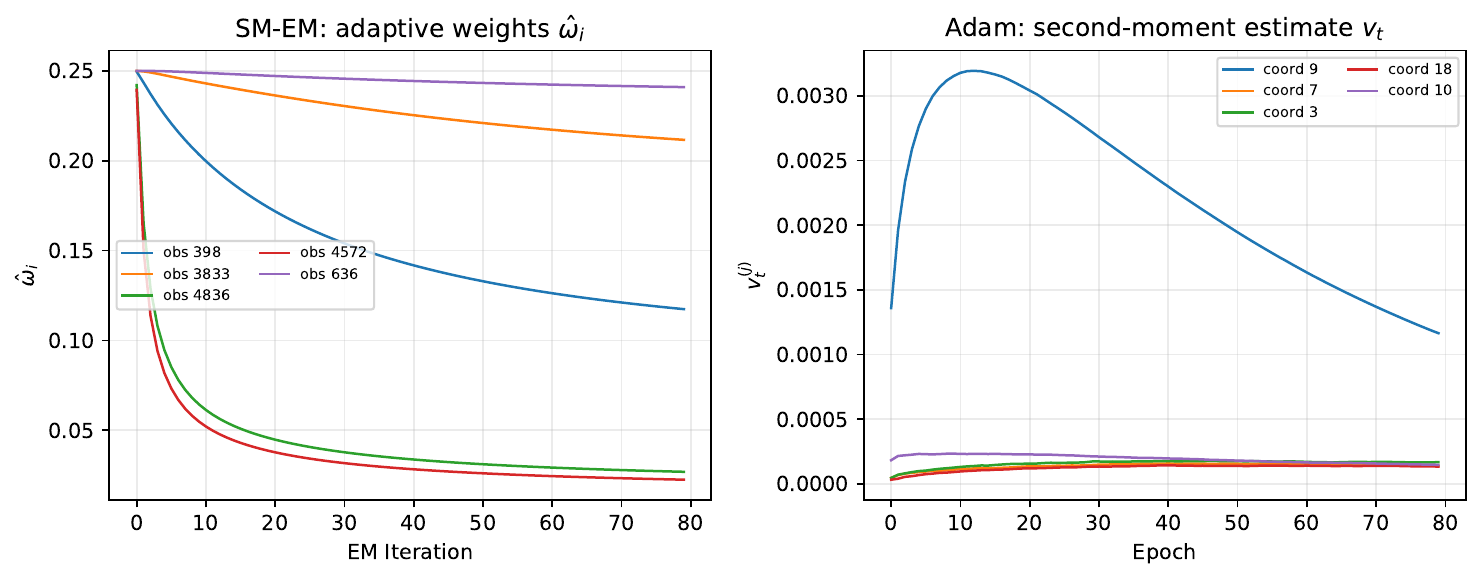}
\caption{Left: SM-EM weights $\hat{\omega}_i$ for five observations---model-derived, decreasing as observations become well-classified.  Right: Adam's $v_t^{(j)}$---heuristic, driven by gradient noise.}
\label{fig:weights}
\end{figure}

\section{Applications}\label{sec:applications}

\subsection{Logistic Regression}

For binomial logistic regression with $n$ observations, $p$ predictors, and trial counts $m_i$ ($m_i = 1$ for binary data), the negative log-likelihood is $\sum_{i=1}^n [m_i \log(1 + e^{x_i^\top\beta}) - y_i\,x_i^\top\beta]$.  Using the P\'olya--Gamma representation \citep{polson2013bayesian}, the latent variable $\omega_i$ has conditional expectation
\[
\hat{\omega}_i = \frac{m_i}{2z_i}\tanh\!\left(\frac{z_i}{2}\right), \quad z_i = x_i^\top\beta\,,
\]
and the M-step reduces to the weighted least squares
\[
\hat{\beta} = \left(\tau^{-2}\hat{\Lambda} + X^\top\hat{\Omega}\,X\right)^{-1}X^\top\kappa\,,
\]
where $\kappa_i = y_i - m_i/2$.  This is more numerically stable than standard IRLS, whose weights $w_i(1-w_i) = O(e^{-|z_i|})$ decay exponentially for well-separated observations, causing near-singular Hessians.  The P\'olya--Gamma weights $\hat{\omega}_i = O(1/|z_i|)$ decay polynomially, maintaining better-conditioned systems.

\subsection{Regularized Regression}

For the Lasso ($g(\beta) = |\beta|$), the penalty admits the scale mixture representation $|\beta| = \inf_{\lambda \geq 0}\{\beta^2/(2\lambda) + \lambda/2\}$.  Combined with a quadratic loss, the E-step via~\eqref{eq:adaptive_wd} gives $\hat{\lambda}_j = \tau^2/|\beta_j^{(t)}|$ and the M-step solves a ridge regression with precision $\tau^{-2}\hat{\Lambda}$.  This is iteratively reweighted $L^2$ penalization.  \citet{strawderman2012hierarchical} showed that hierarchical scale-mixture-of-normals priors yield a generalized grouped Lasso whose MAP estimate is a thresholding estimator with exact risk formulas, connecting the same augmentation structure to penalized likelihood theory.

For the double-Pareto penalty $g(\beta) = \gamma\log(1 + |\beta|/a)$, the proximal operator has the closed form
\[
\prox_{\gamma g}(u) = \frac{\mathrm{sgn}(u)}{2}\left\{|u| - a + \sqrt{(a - |u|)^2 + 4(a|u| - \gamma)_+}\right\},
\]
and applying~\eqref{eq:adaptive_wd} gives the adaptive weight $\hat{\lambda}_j = \gamma\tau^2/\!\bigl((a + |\beta_j|)\,|\beta_j|\bigr)$, consistent with Section~\ref{sec:adam_connection}.  This provides a smooth interpolation between $L^1$ behavior (for small $|\beta_j|$) and negligible shrinkage (for large $|\beta_j|$), replacing the fixed weight decay in AdamW with a data-adaptive alternative.

\subsection{Quantile Regression}

For quantile regression with check loss $\rho_q(\theta) = \frac{1}{2}|\theta| + (q - \frac{1}{2})\theta$, the variance-mean mixture representation with $\kappa_z = 1 - 2q$ and generalized inverse Gaussian mixing gives $\hat{\omega}_i = |y_i - x_i^\top\beta|^{-1}$.  The algorithm reduces to iteratively reweighted least squares with adaptive weights that downweight observations far from the quantile---analogous to Adam's scaling but derived from the loss geometry.

\subsection{Generalized Ridge Regression and Shrinkage Profiles}

The M-step~\eqref{eq:mstep} has a natural interpretation in the canonical basis of the design matrix.  Let $X = UDW^\top$ be the singular value decomposition and $\alpha = W^\top\beta$ the rotated coefficients.  In this basis, the least-squares estimates are $\hat{\alpha}_i = \alpha_i + \epsilon_i$ with $\epsilon_i \sim N(0, \sigma^2/d_i^2)$, where $d_i$ is the $i$th singular value.  Placing independent priors $\alpha_i \sim N(0, v_i)$ yields the generalized ridge estimator
\begin{equation}\label{eq:grr}
\alpha_i^\star = s_i\,\hat{\alpha}_i\,, \quad s_i = \frac{d_i^2\,v_i}{\sigma^2 + d_i^2\,v_i} = \frac{d_i^2}{d_i^2 + k_i}\,,
\end{equation}
where $k_i = \sigma^2/v_i$ is the component-specific ridge penalty.  When all $k_i = k$, this recovers standard ridge regression.

The connection to SM-EM is that the adaptive prior weights $\hat{\lambda}_j$ in the M-step~\eqref{eq:mstep} play exactly the role of the component-specific penalties $k_i$ in~\eqref{eq:grr}.  SM-EM performs generalized ridge regression at every iteration, with data-adaptive penalties that shrink aggressively in low-signal directions (small $d_i$) and minimally in high-signal directions (large $d_i$).  This anticipates the structure of modern global-local shrinkage priors \citep{carvalho2010horseshoe, polson2012local}; see \citet{polson2019bayesian} for a review of the regularization landscape from Tikhonov to horseshoe.

\citet{goldstein1974ridge} showed that generalized ridge dominates least squares in componentwise MSE when the smallest singular value exceeds a signal-to-noise threshold: $\mathrm{MSE}(\alpha_i^\star) < \mathrm{MSE}(\hat{\alpha}_i)$ for all $i$.  This is a stronger guarantee than James--Stein, which ensures overall MSE reduction but not improvement for each parameter individually.  The SM-EM M-step inherits this componentwise improvement because its adaptive weights $\hat{\lambda}_j$ are derived from the model structure rather than from a global shrinkage factor.

\section{Discussion}\label{sec:discussion}

This paper develops a scale-mixture alternative to Robbins--Monro for losses admitting variance-mean mixture representations.  SM-EM uses latent variables $\omega_i$ and $\lambda_j$ as model-derived curvature and shrinkage weights, eliminating learning-rate, momentum, and decay schedules.  In the synthetic logistic benchmarks of Section~\ref{sec:empirical}, this tuning-free algorithm reaches lower final loss than grid-tuned Adam, with the gap growing in dimension and condition number; sharing sufficient statistics across penalty values further accelerates regularization paths.  These empirical gains are specific to the reported settings (ill-conditioned synthetic logistic regression).

The connection to proximal methods is deep.  \citet{toulis2017asymptotic} showed that implicit SGD incorporates Fisher information to achieve unconditional stability, and \citet{toulis2021proximal} formalized this by establishing that the Robbins--Monro recursion is fundamentally a proximal algorithm---the implicit update $\theta_{t+1} = \theta_t - \alpha_t \nabla \ell(\theta_{t+1})$ is equivalent to applying the proximal operator of $\alpha_t \ell$.  The scale mixture framework goes further: it replaces the global proximal step (which still requires a step size $\alpha_t$) with the envelope representation~\eqref{eq:envelope_scale}, yielding observation-specific curvature estimates that serve as automatic step sizes.  In the language of \citet{toulis2017asymptotic}, the implicit SGD shrinkage factor $(I + \gamma_n\hat{\mathcal{I}}_n)^{-1}$ is replaced by the P\'olya--Gamma precision matrix $(\tau^{-2}\hat{\Lambda} + X^\top\hat{\Omega}X)^{-1}$, which is computed in closed form from the loss structure rather than approximated by a Newton step.  The half-quadratic envelope of \citet{polson2016mixtures} provides the same function as the proximal operator but through profiling rather than integration, connecting MM algorithms (optimization) to EM algorithms (statistical inference) via hierarchical duality.

The framework also provides a principled approach to hyperparameter selection via the evidence (marginal likelihood).  Following \citet{mackay1992bayesian} (and adopting his notation for this paragraph), let $\tau_D^2 = 1/\sigma^2$ and $\tau_w^2 = 1/\sigma_w^2$ denote the data and prior precisions, and let $A = \tau_w^2 C + \tau_D^2 B$ be the Hessian of the negative log-posterior at the MAP estimate, where $B$ and $C$ are the Hessians of the data misfit and prior energy.  The log-evidence under a Laplace approximation is
\[
\log p(D \mid \tau_w^2, \tau_D^2) = -\tau_w^2 E_W^{\mathrm{MP}} - \tau_D^2 E_D^{\mathrm{MP}} - \tfrac{1}{2}\log\det A + \text{const}\,,
\]
and setting its derivative with respect to $\tau_w^2$ to zero yields the effective number of parameters
\begin{equation}\label{eq:eff_dof}
\gamma = k - \tau_w^2\,\mathrm{tr}(A^{-1}) = k - \sum_{a=1}^k \frac{\tau_w^2}{\lambda_a + \tau_w^2}\,,
\end{equation}
where $\{\lambda_a\}$ are the eigenvalues of $B$.  The connection to SM-EM follows from structure: the M-step matrix $A = \tau^{-2}\hat{\Lambda} + X^\top\hat{\Omega}X$ matches MacKay's posterior-Hessian form, with $B = X^\top\hat{\Omega}X$ and $C = \hat{\Lambda}$.  Since the complete sufficient statistics $\{G_i = x_ix_i^\top\}$ are already cached, the trace $\mathrm{tr}(A^{-1})$ can be computed from the Cholesky factor of $A$ at low additional cost, making evidence-based selection of $\tau$ inexpensive within SM-EM iterations.  The same trace computation enables generalized cross-validation \citep{golub1979generalized}, whose score $\mathrm{GCV}(\tau) = n^{-1}\|y - \hat{y}\|^2/(1 - \gamma/n)^2$ depends on $\gamma$ and the residual norm.  The effective degrees of freedom $\gamma$ in~\eqref{eq:eff_dof} also aligns with the shrinkage profiles from Section~\ref{sec:applications}: each eigenvalue $\lambda_a$ contributes $\lambda_a/(\lambda_a + \tau_w^2) = s_a$, the same shrinkage weight as in generalized ridge regression~\eqref{eq:grr}.

Nesterov acceleration complements the framework.  Since each full-batch SM-EM iteration is deterministic, Nesterov extrapolation can be applied without variance reduction---unlike stochastic gradient methods, which require SVRG \citep{johnson2013accelerating} or SAGA \citep{defazio2014saga} to reduce noise before acceleration is effective.  Empirically, SM-EM+Nesterov achieves 32--55\% lower loss on the ill-conditioned benchmarks of Section~\ref{sec:empirical}; the formal $O(1/t^2)$ guarantee for SM-EM's adaptive curvature (as opposed to fixed-step proximal gradient) remains an open question (Appendix~\ref{app:convergence}).

For large-scale problems, the $O(p^3)$ M-step solve can be replaced by Halton's sequential Monte Carlo method \citep{halton1994sequential, polson2026fast}, which achieves geometric convergence with $O(ps)$ per-iteration cost---enabling SM-EM for problems where direct factorization is prohibitive.  The complete sufficient statistics structure of the M-step \citep{polson2011data} further reduces cost: the outer products $x_i x_i^\top$ are precomputed once, and for sparse models the active set shrinks across iterations, making later EM steps progressively cheaper.

Several limitations deserve mention.  First, SM-EM applies only to losses admitting a scale mixture of normals representation; while this covers a broad class (logistic, hinge, check, bridge penalties---see Section~\ref{sec:scalemix}), arbitrary neural network losses do not qualify.  Second, the experiments here use logistic regression on synthetic data; real-data benchmarks and extensions to multi-class or deep models remain future work.  For the multinomial case, \citet{linderman2015dependent} showed that stick-breaking with P\'olya--Gamma augmentation yields conjugate Gaussian updates, suggesting a natural path for extending SM-EM beyond binary classification.  Third, SM-EM as presented is a full-batch method.  For problems where $n$ is too large for full-batch evaluation, a stochastic EM variant is needed; the online EM framework of \citet{cappe2009online} provides a path, though the tuning-free guarantee would require additional analysis.  Fourth, the $O(p^3)$ M-step, while addressed by Halton Monte Carlo for very large $p$, means SM-EM is slower per iteration than SGD-based methods when $p$ is large and the problem is well-conditioned enough for first-order methods to suffice.

Extensions to full posterior inference via MCMC (Gibbs sampling on $\omega, \lambda, \beta$) use the same data augmentation structure, and the Katyusha momentum of \citet{allenzhu2017katyusha} offers a path to accelerated stochastic variants for the finite-sum setting.

\ifarxiv\else
\acks{The authors thank the action editor and reviewers for constructive feedback.}
\fi

\appendix

\section{Convergence Theory for SM-EM with Nesterov Acceleration}\label{app:convergence}

This appendix discusses the convergence theory underlying Nesterov acceleration as applied to the scale mixture EM algorithm.

\subsection{Half-Quadratic Framework}

Consider the composite objective $F(w) = f(w) + \phi(w)$ where $\phi$ is convex.  Suppose $f(w)$ admits a half-quadratic envelope:
\begin{equation}\label{eq:hq_envelope}
f(w) = \inf_v \left\{ f(v) + \nabla f(v)^\top(w - v) + \frac{L(v)}{2}\vnorm{w - v}^2 \right\},
\end{equation}
where the infimum is attained at $v = w$.  The standard proximal gradient method uses a constant $L(v) = L$; in the scalar case, the scale mixture approach uses the adaptive choice $L(v) = (\nabla f(v) + \kappa)/v$ (cf.\ the GR minimizer $\hat{\omega} = \phi'(x)/x$ in Table~\ref{tab:hq_envelopes}), which yields tighter quadratic bounds at points far from the optimum.  In the multivariate case, $L(v)$ is replaced by the observation-specific precision matrix $X^\top\hat{\Omega}X$.  We assume a lower bound $L \preceq L(v)$ for all $v$ to guarantee fast convergence.

Define the joint objective $F(w, v) = f(v) + \nabla f(v)^\top(w - v) + \phi(w)$.  Finding the marginal mode $w^\star = \arg\min_w F(w)$ reduces to alternating minimization of
\[
\min_{w,v}\left\{F(w, v) + \frac{L(v)}{2}\vnorm{w - v}^2\right\}.
\]
The conditional mode for $v$ given $w_t$ is $v_t = w_t$ (the ICM step); the conditional mode for $w$ given $v_t$ is the proximal operator
\begin{equation}\label{eq:prox_step_app}
w_{t+1} = \prox_{L(v_t)^{-1}\phi}\!\left(v_t - L(v_t)^{-1}\nabla f(v_t)\right),
\end{equation}
since completing the square in $w$ yields $\frac{L(v_t)}{2}\vnorm{w - (v_t - L(v_t)^{-1}\nabla f(v_t))}^2 + \phi(w)$ plus terms not depending on $w$.  When $L(v)$ is the adaptive choice from the scale mixture, this proximal step uses the model-derived curvature rather than the global Lipschitz constant.

\subsection{Three-Point Inequality}

For any convex $\phi$ and parameter $\gamma > 0$, the Moreau envelope $E_{\gamma\phi}(x) = \inf_z\{\frac{1}{2\gamma}\enorm{z - x}^2 + \phi(z)\}$ satisfies the descent property
\[
E_{\gamma\phi}(y) \leq E_{\gamma\phi}(x) - \frac{1}{2\gamma}\vnorm{x - y}^2, \quad y = \prox_{\gamma\phi}(x).
\]
When $L \preceq L(x)$ for all $x$, this generalizes to the three-point inequality: for $y = \prox_{L(x)^{-1}\phi}(x)$,
\begin{equation}\label{eq:3point}
\phi(y) + \frac{L(x)}{2}\enorm{y - x}^2 \leq \phi(z) + \frac{L(x)}{2}\enorm{z - x}^2 - \frac{L}{2}\vnorm{z - y}^2 \quad \forall z.
\end{equation}
This reverse-triangle Pythagoras inequality is the key tool for establishing convergence rates.

Combining~\eqref{eq:3point} with the descent lemma $l_f(w, v) \defeq f(w) - f(v) - \nabla f(v)^\top(w - v) \leq \frac{L(v)}{2}\|w - v\|^2$ (which holds when $\nabla f$ is Lipschitz) and convexity ($l_f \geq 0$), each proximal step satisfies
\begin{equation}\label{eq:descent_bound}
F(w_{t+1}) \leq F(w^\star) + \frac{L}{2}\vnorm{w^\star - v_t}^2 - \frac{L}{2}\vnorm{w^\star - w_{t+1}}^2.
\end{equation}

\subsection{$O(1/k^2)$ Convergence with Nesterov Acceleration}

To accelerate convergence beyond the $O(1/k)$ rate of~\eqref{eq:descent_bound}, introduce an auxiliary sequence $u_t$ and weights $\theta_t$ such that
\begin{align}
v_t &= (1 - \theta_t)w_t + \theta_t u_t, \label{eq:vt_def}\\
u_{t+1} &= \frac{1}{\theta_t}\!\left(w_{t+1} - (1 - \theta_t)w_t\right). \label{eq:ut_def}
\end{align}
The intermediate point $v_t$ is a convex combination of $w_t$ and $u_t$, chosen to ensure errors compound at rate $O(1/k^2)$.

\begin{theorem}[Nesterov, 1983; Beck and Teboulle, 2009]\label{thm:nesterov}
Let $f$ be convex with $L$-Lipschitz gradient.  Define $\lambda_0 = 0$, $\lambda_s = (1 + \sqrt{1 + 4\lambda_{s-1}^2})/2$, and $\gamma_s = (1 - \lambda_s)/\lambda_{s+1}$.  Starting from $x_1 = y_1$, iterate
\begin{align*}
y_{s+1} &= x_s - \frac{1}{L}\nabla f(x_s), \\
x_{s+1} &= (1 - \gamma_s)y_{s+1} + \gamma_s y_s.
\end{align*}
Then $f(y_t) - f(x^\star) \leq 2L\|x_1 - x^\star\|^2 / t^2$.
\end{theorem}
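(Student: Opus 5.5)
The plan is the standard estimate-sequence / Lyapunov argument (as in Beck--Teboulle, and Bubeck's presentation). Write $\delta_s = f(y_s) - f(x^\star)$ and $g_s = \nabla f(x_s)$. The only analytic input is the descent lemma for an $L$-smooth convex function, used in the form: for every $z$,
\[
f(y_{s+1}) - f(z) \le g_s^\top (x_s - z) - \tfrac{1}{2L}\|g_s\|^2 ,
\]
which combines $f(y_{s+1}) \le f(x_s) - \frac{1}{2L}\|g_s\|^2$ (smoothness with step $1/L$) with $f(x_s) - f(z) \le g_s^\top(x_s - z)$ (convexity). I would apply this twice, with $z = y_s$ and with $z = x^\star$. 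This gives
\[
\delta_{s+1} - \delta_s \le L(x_s - y_{s+1})^\top(x_s - y_s) - \tfrac{L}{2}\|x_s - y_{s+1}\|^2
\]
and
\[
\delta_{s+1} \le L(x_s - y_{s+1})^\top(x_s - x^\star) - \tfrac{L}{2}\|x_s - y_{s+1}\|^2 ,
\]
using $g_s = L(x_s - y_{s+1})$.

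Next I combine these with weights $(\lambda_s - 1)$ and $1$ and multiply through by $\lambda_s$. The recursion $\lambda_s = (1 + \sqrt{1 + 4\lambda_{s-1}^2})/2$ is exactly equivalent to $\lambda_s^2 - \lambda_s = \lambda_{s-1}^2$. Hence the left side becomes $\lambda_s^2 \delta_{s+1} - \lambda_{s-1}^2 \delta_s$. The right side has the form $-\tfrac{L}{2}\bigl(\|a\|^2 + 2a^\top b\bigr) = -\tfrac{L}{2}\bigl(\|a+b\|^2 - \|b\|^2\bigr)$ with
\[
a = \lambda_s(y_{s+1} - x_s), \qquad b = \lambda_s x_s - (\lambda_s - 1) y_s - x^\star .
\]
Completing the square in this way gives
\[
\lambda_s^2 \delta_{s+1} - \lambda_{s-1}^2 \delta_s \le \tfrac{L}{2}\bigl(\|u_s\|^2 - \|u_{s+1}\|^2\bigr),
\]
provided the vector $u_s \defeq \lambda_s x_s - (\lambda_s - 1) y_s - x^\star$ satisfies $u_{s+1} = \lambda_s y_{s+1} - (\lambda_s - 1) y_s - x^\star$.

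The step I expect to need the most care is checking this identity against the paper's indexing of the momentum step. By definition $u_{s+1} = \lambda_{s+1} x_{s+1} - (\lambda_{s+1} - 1) y_{s+1} - x^\star$. Substituting $x_{s+1} = (1 - \gamma_s) y_{s+1} + \gamma_s y_s$ with $\lambda_{s+1}\gamma_s = 1 - \lambda_s$ gives
\[
\lambda_{s+1} x_{s+1} = (\lambda_{s+1} - 1 + \lambda_s)\, y_{s+1} + (1 - \lambda_s)\, y_s .
\]
Subtracting $(\lambda_{s+1} - 1) y_{s+1}$ then yields the claimed expression. So the algebra closes precisely because $\gamma_s$ is defined as $(1 - \lambda_s)/\lambda_{s+1}$. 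I would also check the start of the recursion. Since $\lambda_0 = 0$, the term $\lambda_0^2 \delta_1$ vanishes, and $\lambda_1 = 1$ gives $u_1 = x_1 - x^\star$ (using $x_1 = y_1$).

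Summing the inequality from $s = 1$ to $t - 1$ telescopes to
\[
\lambda_{t-1}^2 \delta_t \le \tfrac{L}{2}\|x_1 - x^\star\|^2 .
\]
The final ingredient is the growth bound $\lambda_s \ge (s+1)/2$, proved by induction from $\lambda_s \ge \tfrac12 + \lambda_{s-1}$, which follows from $\sqrt{1 + 4\lambda_{s-1}^2} \ge 2\lambda_{s-1}$. This gives $\lambda_{t-1}^2 \ge t^2/4$ and therefore
\[
\delta_t \le \frac{2L\,\|x_1 - x^\star\|^2}{t^2},
\]
which is the claimed bound.
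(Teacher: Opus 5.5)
Your proposal is correct and follows the paper's proof essentially step for step: the same descent-plus-convexity inequality applied at $z = y_s$ and $z = x^\star$, the $(\lambda_s - 1)$-weighted combination multiplied by $\lambda_s$ using $\lambda_{s-1}^2 = \lambda_s^2 - \lambda_s$, the same potential $u_s$ and identity for $u_{s+1}$, telescoping, and the bound $\lambda_{t-1} \ge t/2$. One small point, which the paper's proof also leaves implicit: $\lambda_s \ge (s+1)/2$ fails at $s = 0$, so your induction should start from $\lambda_1 = 1$, and the case $t = 1$ (empty telescoping sum) needs the separate bound $\delta_1 = f(x_1) - f(x^\star) \le \tfrac{L}{2}\|x_1 - x^\star\|^2$.
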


\begin{proof}
For any $x, y \in \mathbb{R}^n$, $L$-smoothness of $f$ gives
\begin{align*}
f\!\left(x - \tfrac{1}{L}\nabla f(x)\right) - f(y) &\leq -\tfrac{1}{2L}\|\nabla f(x)\|^2 + \nabla f(x)^\top(x - y) \\
&= -\tfrac{L}{2}\|y_{s+1} - x_s\|^2 - L(y_{s+1} - x_s)^\top(x_s - y).
\end{align*}
Apply with $(x, y) = (x_s, y_s)$ and $(x, y) = (x_s, x^\star)$.  Let $\delta_s = f(y_s) - f(x^\star)$.  Multiply the first inequality by $(\lambda_s - 1)$ and add to the second:
\[
\lambda_s\,\delta_{s+1} - (\lambda_s - 1)\,\delta_s \leq -\tfrac{L}{2}\lambda_s\|y_{s+1} - x_s\|^2 - L(y_{s+1} - x_s)^\top\!\left(\lambda_s x_s - (\lambda_s - 1)y_s - x^\star\right).
\]
Multiplying by $\lambda_s$ and using $\lambda_{s-1}^2 = \lambda_s^2 - \lambda_s$:
\[
\lambda_s^2\,\delta_{s+1} - \lambda_{s-1}^2\,\delta_s \leq \tfrac{L}{2}\!\left(\|u_s\|^2 - \|u_{s+1}\|^2\right),
\]
where $u_s = \lambda_s x_s - (\lambda_s - 1)y_s - x^\star$, using the identity $\lambda_{s+1}x_{s+1} - (\lambda_{s+1} - 1)y_{s+1} = \lambda_s y_{s+1} - (\lambda_s - 1)y_s$ from the definition of $x_{s+1}$.  Telescoping from $s = 1$ to $s = t - 1$:
\[
\lambda_{t-1}^2\,\delta_t \leq \tfrac{L}{2}\|u_1\|^2.
\]
Since $\lambda_{t-1} \geq t/2$ by induction, we obtain $\delta_t \leq 2L\|x_1 - x^\star\|^2/t^2$.
\end{proof}

Theorem~\ref{thm:nesterov} assumes a fixed Lipschitz constant $L$ and uses $1/L$ as the step size.  SM-EM uses adaptive curvature $L(v_t)$ that varies per iteration (Section~\ref{sec:scalemix}), so the theorem does not apply directly.  Extending the $O(1/t^2)$ guarantee to the adaptive setting requires the uniform lower bound $L \preceq L(v)$ from~\eqref{eq:hq_envelope} together with a generalized descent lemma for variable metrics; this extension is a direction for future work.  In the experiments (Figure~\ref{fig:nesterov}), SM-EM+Nesterov exhibits convergence consistent with the $O(1/t^2)$ rate on the ill-conditioned problems tested, and the deterministic full-batch E-step ensures no stochastic noise enters the extrapolation.

\section{Half-Quadratic Envelopes and Proximal Operators}\label{app:envelopes}

\subsection{Table of Half-Quadratic Envelopes}

Table~\ref{tab:hq_envelopes} catalogs the two standard half-quadratic representations for common penalty functions: the Geman--Reynolds (GR) scale representation $\phi(x) = \inf_s\{\frac{1}{2}x^2 s + \psi(s)\}$ with minimizer $\hat{s} = \phi'(x)/x$, and the Geman--Yang (GY) location representation $\phi(x) = \inf_s\{\frac{1}{2}(x - s)^2 + \psi(s)\}$ with minimizer $\hat{s} = \prox_\psi(x) = x - \phi'(x)$.  The GR representation corresponds to the variance mixture (scale mixture of normals); the GY representation corresponds to the location mixture (Moreau envelope).

\begin{table}[h]
\centering
\caption{Half-quadratic envelopes for common penalties.  GR: $Q(x,s) = \frac{1}{2}x^2 s$ (scale/variance mixture); GY: $Q(x,s) = \frac{1}{2}(x - s)^2$ (location/Moreau envelope).}\label{tab:hq_envelopes}
\begin{tabular}{llll}
\hline
\textbf{Penalty} & $\phi(x)$ & \textbf{GR minimizer} $\hat{s}$ & \textbf{GY minimizer} $\hat{s}$ \\
\hline
$L^p$, $\alpha \in (1,2]$ & $|x|^\alpha$ & $\alpha|x|^{\alpha-2}$ & --- \\
$\alpha$-$L^1$ & $\sqrt{\alpha + x^2}$ & $(\alpha + x^2)^{-1/2}$ & $x - x/\sqrt{\alpha + x^2}$ \\
$L^1$-Pareto & $\frac{|x|}{\alpha} - \log(1 + \frac{|x|}{\alpha})$ & $[\alpha(\alpha + |x|)]^{-1}$ & $x - x/[\alpha(\alpha + |x|)]$ \\
Huber & $\begin{cases} x^2/2 & |x| \leq \alpha \\ \alpha|x| - \alpha^2/2 & |x| > \alpha \end{cases}$ & $\min(1, \alpha/|x|)$ & --- \\
Logit & $\log\cosh(\alpha x)$ & $\alpha\tanh(\alpha x)/x$ & $x - \alpha\tanh(\alpha x)$ \\
\hline
\end{tabular}
\end{table}

\subsection{Proximal Operators for $L^p$ Penalties}

The proximal operator $\prox_{\lambda|\cdot|^p}(y) = \arg\min_z\{\frac{1}{2}(y - z)^2 + \lambda|z|^p\}$ has closed-form solutions for several values of $p$:
\begin{align*}
p = 1&: \quad \mathrm{sgn}(y)\max(|y| - \lambda, 0) &&\text{(soft thresholding)}\\
p = \tfrac{3}{2}&: \quad y + \tfrac{9\lambda^2}{8}\,\mathrm{sgn}(y)\!\left(1 - \sqrt{1 + \tfrac{16|y|}{9\lambda^2}}\right) \\
p = 2&: \quad y/(1 + 2\lambda) &&\text{(ridge)}\\
p = 3&: \quad \mathrm{sgn}(y)\!\left(\sqrt{1 + 12\lambda|y|} - 1\right)\!/(6\lambda)
\end{align*}
For $0 < p < 1$ (concave penalties), a threshold $y^\star = (\lambda p(1-p))^{1/(2-p)}$ separates the shrinkage-to-zero region from the continuous shrinkage region, yielding sparse solutions.

\subsection{Lasso via the AM-GM Identity}

The $L^1$ norm admits the quadratic representation
\begin{equation}\label{eq:amgm}
\vnorm{\beta}_1 = \sum_{i=1}^p \inf_{\lambda_i \geq 0}\left\{\frac{\beta_i^2}{2\lambda_i} + \frac{\lambda_i}{2}\right\},
\end{equation}
with equality at $\lambda_i = |\beta_i|$, via the arithmetic-geometric mean inequality $\sqrt{ab} \leq (a + b)/2$.  This gives a joint convex objective
\[
\inf_{\beta, \lambda}\left\{\frac{1}{2}\vnorm{y - X\beta}^2 + \rho\sum_{i=1}^p\left(\frac{\beta_i^2}{\lambda_i} + \lambda_i\right)\right\},
\]
which reduces to iteratively reweighted ridge regression: the $\lambda$-step sets $\hat{\lambda}_i = |\beta_i|$, and the $\beta$-step solves ridge with adaptive penalties $\rho/\hat{\lambda}_i$.  This is precisely the variance-mean mixture representation of the Lasso, connecting the $L^1$ proximal operator (soft thresholding) to the scale mixture EM framework.

\subsection{Iterative Shrinkage as EM}

The iterative shrinkage-thresholding (IST) algorithm for $\min_\beta \frac{1}{2}\|y - X\beta\|^2 + \tau\phi(\beta)$ can be derived as an EM algorithm with the data augmentation
\begin{align*}
p(y \mid z) &\sim N(y \mid Xz,\; I - \alpha^{-1}XX^\top), \\
p(z \mid \beta) &\sim N(z \mid \beta,\; \alpha^{-1}I),
\end{align*}
where $\alpha$ is the step-length parameter.  The E-step computes $\hat{z}^{(k)} = \beta^{(k)} + \alpha^{-1}X^\top(y - X\beta^{(k)})$ (a gradient step), and the M-step applies the proximal operator $\beta^{(k+1)} = \prox_{\tau\phi/\alpha}(\hat{z}^{(k)})$ (a shrinkage step).  Substituting recovers the standard IST iteration $\beta^{(k+1)} = \prox_{\tau\phi/\alpha}(\beta^{(k)} - \alpha^{-1}X^\top(X\beta^{(k)} - y))$.  The EM interpretation guarantees monotonic decrease of the penalized likelihood at each iteration, and connects IST to the broader scale mixture framework.

\section{Proximal Point Algorithms, Quasi-Newton, and Bregman Duality}\label{app:ppa}

\subsection{EM and MM as Proximal Point Algorithms}

The proximal point algorithm (PPA) of \citet{rockafellar1976monotone} seeks the minimum of $J(x)$ via
\begin{equation}\label{eq:ppa}
x^{(k+1)} = \arg\min_x\left\{J(x) + \eta^{(k)} D(x, x^{(k)})\right\},
\end{equation}
where $D(x, y) \geq 0$ with $D(x, x) = 0$ is a proximity measure (originally $D = \frac{1}{2}\|x - y\|^2$, but Bregman divergences, Kullback--Leibler, and other choices are possible).  The construction guarantees monotonicity:
\[
J(x^{(k+1)}) \leq J(x^{(k)})\,.
\]
The step size $\eta^{(k)}$ controls the conservatism of each update.

Both EM and MM algorithms are special cases of PPA\@.  The EM algorithm chooses $D$ from the complete-data log-likelihood and fixes $\eta^{(k)} = 1$.  This unit step size is part of why EM converges slowly---there is no mechanism to take larger steps.  The MM (majorization--minimization) algorithm constructs a majorizing function $Q(x \mid \tilde{x}) \geq J(x)$ with $Q(x \mid x) = J(x)$; the iterate $x^{(k+1)} = \arg\min\,Q(\cdot \mid x^{(k)})$ then satisfies $J(x^{(k+1)}) \leq J(x^{(k)})$, again a PPA with a specific majorizing envelope.

The SM-EM algorithm inherits this PPA structure, but the half-quadratic envelope provides a tighter majorizer than the generic EM bound: the adaptive weights $\hat{\omega}_i$ shape the quadratic proximity measure to the local curvature of the loss, yielding an effective step size larger than the unit step of standard EM while retaining the monotonic descent guarantee.  Nesterov-style algorithms sacrifice monotonicity for faster $O(1/k^2)$ convergence---the iterates spiral toward the solution.  When Nesterov extrapolation is applied to SM-EM (Appendix~\ref{app:convergence}), the outer monotonicity guarantee is lost, but each inner proximal step still uses the adaptive half-quadratic envelope; empirically, the resulting iterates converge at a rate consistent with $O(1/k^2)$, though the formal guarantee for adaptive curvature remains open.

\subsection{Half-Quadratic Penalties as Quasi-Newton}\label{app:quasinewton}

The half-quadratic iteration has a direct interpretation as a quasi-Newton method, deepening the preconditioning analogy from Section~\ref{sec:empirical}.  Consider the cost functional
\[
J(x) = \frac{1}{2}\|Ax - y\|^2 + \lambda\,\phi(x)\,,\qquad \phi(x) = \sum_{i=1}^p \phi\!\left((G^\top x - w)_i\right),
\]
where $G$ encodes a linear operator (e.g., the identity for direct penalization).  Introducing dual variables $(\lambda_1, \ldots, \lambda_p)$ such that $J(x) = \inf_\lambda J(x, \lambda)$, the joint objective is quadratic in $x$ given $\lambda$.

If $\phi$ is differentiable, the gradient is
\begin{align*}
\nabla_x J(x) &= A^\top A\,x - A^\top y + \lambda\sum_{i=1}^p G_i\,\frac{\phi'(\|\delta_i\|)}{\|\delta_i\|}\,G_i^\top x \\
&= \bigl(A^\top A + \lambda\,G\,\Lambda(x)\,G^\top\bigr)x - A^\top y = L(\hat{\lambda}(x))\,x - A^\top y\,,
\end{align*}
where $\delta_i = (G^\top x - w)_i$, $\Lambda(x) = \mathrm{diag}(\phi'(\|\delta_i\|)/\|\delta_i\|)$, and $L(\hat{\lambda}(x)) = A^\top A + \lambda\,G\,\Lambda(x)\,G^\top$.

Gradient linearization gives the iterative mapping $x^{(k+1)} = L(\hat{\lambda}(x^{(k)}))^{-1}A^\top y$, while quasi-Newton computes $x^{(k+1)} = x^{(k)} - L(x^{(k)})^{-1}\nabla_x J(x^{(k)})$.  These are the same iteration: the half-quadratic update is a quasi-Newton step with the normal equation matrix $L(\hat{\lambda}(x))$ serving as the Hessian approximation.

For the Geman--Reynolds (scale) representation, the normal equation matrix is $B_{\mathrm{GR}} = A^\top A + G\,\Lambda(x)\,G^\top$ with $\hat{\Lambda} = \mathrm{diag}(\phi'(\delta_i)/\delta_i)$, and the iteration is
\begin{align*}
\hat{\lambda}_i^{(k+1)} &= \phi'(\delta_i^{(k)})/\delta_i^{(k)}\,, \\
x^{(k+1)} &= B_{\mathrm{GR}}^{-1}\bigl(A^\top y + G\,\hat{\Lambda}^{(k+1)}w\bigr)\,.
\end{align*}
For the Geman--Yang (location) representation with rescaling parameter $a = 1/L$ (the reciprocal Lipschitz constant of $\phi'$), the matrix is $B_{\mathrm{GY}}^a = A^\top A + a^{-1}GG^\top$, which does not depend on the iterate---it is a fixed preconditioner computed once.  The iteration becomes
\begin{align*}
\hat{\lambda}_i^{(k+1)} &= \delta_i^{(k)} - a\,\phi'(\delta_i^{(k)})\,, \\
x^{(k+1)} &= (B_{\mathrm{GY}}^a)^{-1}\bigl(A^\top y + a^{-1}G(\hat{\lambda}^{(k+1)} + w)\bigr)\,.
\end{align*}

The connection to preconditioning is now explicit.  The Geman--Reynolds (scale mixture) iteration uses an adaptive preconditioner $B_{\mathrm{GR}}$ that changes at each step---exactly the mechanism exploited by SM-EM, where the precomputed outer products $\{G_i = x_i x_i^\top\}$ are the offline building blocks and the weights $\hat{\omega}_i$ provide the adaptive rescaling.  The Geman--Yang iteration uses a fixed preconditioner $B_{\mathrm{GY}}^a$, analogous to a one-time incomplete Cholesky factorization.  The variance-mean mixture, which generalizes both, interpolates: the scale component provides adaptive curvature, while the location component provides a fixed stabilizing term.

\subsection{Formal Derivation of Scale and Location Mixtures}

\begin{theorem}[Location mixtures; Geman--Yang]\label{thm:gy}
Suppose that $\theta(x) = \frac{1}{2}x^2 - \phi(x)$ is a convex function.  Then
\begin{align*}
\phi(x) &= \inf_{\lambda \in \mathcal{R}}\left\{\frac{1}{2}(x - \lambda)^2 + \psi(\lambda)\right\}, \\
\psi(\lambda) &= \frac{1}{2}\lambda^2 - \theta^\star(\lambda)\,,
\end{align*}
where $\theta^\star$ is the convex conjugate of $\theta$.  If $\phi$ is differentiable, the infimum is attained at $\hat{\lambda} = \prox_\psi(x) = x - \phi'(x)$.
\end{theorem}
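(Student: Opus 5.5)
The plan is to expand the square and reduce everything to Fenchel--Moreau duality for the convex function $\theta$, as in Lemma~\ref{lem:fenchel}. For any $\lambda$,
\[
\tfrac{1}{2}(x-\lambda)^2 + \psi(\lambda) = \tfrac{1}{2}x^2 - x\lambda + \tfrac{1}{2}\lambda^2 + \tfrac{1}{2}\lambda^2 - \theta^\star(\lambda).
\]
The quadratic $\tfrac{1}{2}\lambda^2$ from the square would then have to be absorbed. The cleaner bookkeeping is to set $\psi(\lambda) = \theta^\star(\lambda) - \tfrac{1}{2}\lambda^2$, since that is the sign that makes the algebra close. With this choice the bracket becomes $\tfrac{1}{2}x^2 - \{x\lambda - \theta^\star(\lambda)\}$, so
\[
\inf_\lambda\{\cdot\} = \tfrac{1}{2}x^2 - \sup_\lambda\{x\lambda - \theta^\star(\lambda)\}.
\]
I will therefore first check which sign convention for $\psi$ is consistent. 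As written, $\psi = \tfrac{1}{2}\lambda^2 - \theta^\star$ gives a bracket $\tfrac{1}{2}x^2 - x\lambda + \lambda^2 - \theta^\star(\lambda)$, which does not reduce to a conjugate. I expect the intended reading is $\psi(\lambda) = \theta^\star(\lambda) - \tfrac12\lambda^2$, or equivalently that the statement's $\theta^\star$ is shorthand for the conjugate-type term with the sign absorbed. The proof would note this and proceed with the convention that makes the identity hold.

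The second step is to invoke Lemma~\ref{lem:fenchel}. Because $\theta$ is convex, and assumed closed (lower semicontinuous, automatic if $\phi$ is continuous), we have $\theta^{\star\star} = \theta$. Hence
\[
\sup_\lambda\{x\lambda - \theta^\star(\lambda)\} = \theta(x) = \tfrac12 x^2 - \phi(x),
\]
and the infimum equals $\tfrac12 x^2 - \theta(x) = \phi(x)$, which is the representation.

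The third step identifies the minimizer. By the last clause of Lemma~\ref{lem:fenchel}, the supremum is attained at $\hat\lambda \in \partial\theta(x)$. When $\phi$ is differentiable, so is $\theta$, and $\hat\lambda = \theta'(x) = x - \phi'(x)$.

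To see that $\hat\lambda = \prox_\psi(x)$, I would read the representation as a Moreau envelope: $\phi = E_1\psi$ in the notation of \eqref{eq:moreau_env}. The point attaining the infimum in \eqref{eq:prox_def} is then by definition $\prox_\psi(x)$. Uniqueness follows because the objective, rewritten as $\tfrac12 x^2 - x\lambda + \theta^\star(\lambda)$, is convex in $\lambda$, and its minimizers are exactly $\partial\theta(x)$, a singleton when $\theta$ is differentiable. This also gives the standard identity $\prox_\psi(x) = x - \nabla E_1\psi(x)$.

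The main obstacle is the sign/convention issue for $\psi$ in the first step, together with the closedness hypothesis needed for biconjugation. I would address both explicitly rather than through the computations, which are routine.
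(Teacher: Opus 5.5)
Your proposal is correct and follows the paper's argument: expand the square, invoke $\theta^{\star\star}=\theta$, then identify the minimizer (you via $\partial\theta(x)$, the paper via the stationarity condition $\hat\lambda-x+\psi'(\hat\lambda)=0$). Your sign diagnosis is also right: the paper's proof slips in its last equality, rewriting $\tfrac12x^2-\lambda x+\tfrac12\lambda^2-\psi(\lambda)$ as $\tfrac12(x-\lambda)^2+\psi(\lambda)$ instead of $\tfrac12(x-\lambda)^2-\psi(\lambda)$, so the representation holds with $\psi=\theta^\star-\tfrac12\lambda^2$ and not with the stated $\psi$. Concretely, $\phi(x)=x^2/4$ gives $\theta^\star(\lambda)=\lambda^2$; the stated $\psi=-\lambda^2/2$ makes the infimum $-\infty$ for every $x\neq 0$, whereas your $\psi=\lambda^2/2$ recovers $\phi$ with minimizer $x/2=x-\phi'(x)$.
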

\begin{proof}
The convex conjugate of $\theta(x) = \frac{1}{2}x^2 - \phi(x)$ satisfies
\[
\theta(x) = \sup_{\lambda}\left\{\lambda\,x - \theta^\star(\lambda)\right\} = \sup_\lambda\left\{\lambda\,x - \frac{1}{2}\lambda^2 + \psi(\lambda)\right\}.
\]
Hence $\phi(x) = \frac{1}{2}x^2 - \theta(x) = \inf_\lambda\{\frac{1}{2}x^2 - \lambda\,x + \frac{1}{2}\lambda^2 - \psi(\lambda)\} = \inf_\lambda\{\frac{1}{2}(x - \lambda)^2 + \psi(\lambda)\}$.  When $\phi$ is differentiable, the first-order condition $\hat{\lambda} - x + \psi'(\hat{\lambda}) = 0$ gives $\hat{\lambda} = x - \phi'(x)$, which is $\prox_\psi(x)$ by definition.
\end{proof}

For the scale (Geman--Reynolds) representation, if $\phi(\sqrt{x})$ is convex then $\theta(x) = -\phi(\sqrt{x})$ is concave and the concavity inequality gives
\[
\phi(x) = \inf_{\omega \geq 0}\left\{\omega\,x^2 + \psi(\omega)\right\}, \quad \hat{\omega}(x) = \frac{\phi'(x)}{2x}\,.
\]

The variance-mean mixture generalizes both.  Assume $\theta(x) = -(\phi(\sqrt{x}) + 2\kappa\sqrt{x})$ is concave.  Since $\theta'(x)$ is strictly decreasing, the concavity inequality $\theta(x) \leq \theta'(y)(x - y) + \theta(y)$ with equality at $y = x$ yields
\begin{align*}
\phi(\sqrt{x}) &= \inf_\omega\left\{\omega\,x - 2\kappa\sqrt{x} + \psi(\omega)\right\} \\
&= \inf_\omega\left\{\omega(\sqrt{x} - \kappa\,\omega^{-1})^2 + \psi(\omega) - \kappa^2\omega^{-1}\right\},
\end{align*}
where $\hat{\omega}(x) = \theta'(x) = (\phi'(x) + 2\kappa)/(2x)$.  The transformation $\sqrt{x} \to x$ recovers the variance-mean mixture~\eqref{eq:vmm} used in the main text.  The parameter $\kappa$ handles asymmetric losses: setting $\kappa = 0$ in the scale form recovers Geman--Reynolds, and setting $\kappa = 0$ in the location form recovers Geman--Yang.

\subsection{Bregman Divergence and Exponential Families}\label{app:bregman}

The Bregman divergence of a differentiable strictly convex function $\phi$ measures ``how convex'' $\phi$ is:
\[
D_\phi(x, y) = \phi(x) - \phi(y) - \nabla\phi(y)^\top(x - y) \geq 0\,,
\]
with equality if and only if $x = y$.  Common choices include: $\phi(z) = \frac{1}{2}z^\top z$ giving $D_\phi = \frac{1}{2}\|x - y\|^2$ (squared Euclidean); $\phi(z) = z\log z$ giving $D_\phi = x\log(x/y) - x + y$ (Kullback--Leibler); $\phi(z) = -\log z$ giving $D_\phi = x/y - \log(x/y) - 1$ (Itakura--Saito).

The connection to our framework is through natural exponential families.  If $\psi(\theta)$ is the cumulant generating function and $\phi$ is its Legendre dual, then $p_\psi(x \mid \theta) = \exp(-D_\phi(x, \mu(\theta)) - h_\phi(x))$, where $\mu(\theta) = \psi'(\theta)$ is the mean parameter.  MAP estimation becomes Bregman divergence minimization:
\[
\max_\theta\,\sum_{i=1}^n \log p_\psi(x_i \mid \theta) \quad\Leftrightarrow\quad \min_\mu\,\sum_{i=1}^n D_\phi(x_i, \mu(\theta))\,.
\]
This provides a direct link between the SM-EM framework and natural gradient methods \citep{amari1998natural}: the Fisher information matrix $I(\theta) = \nabla^2\psi(\theta)$ is the Hessian of the Bregman generator, so the natural gradient step $\theta_{t+1} = \theta_t - I(\theta_t)^{-1}\nabla\ell(\theta_t)$ is a proximal step in the Bregman geometry---the same structure as the SM-EM M-step, but with the Bregman divergence replacing the Euclidean proximity.  The PPA framework of~\eqref{eq:ppa} unifies both: choosing $D = D_\phi$ (Bregman) gives mirror descent and natural gradient; choosing $D$ via the half-quadratic envelope gives SM-EM\@.

\bibliography{ref}
\end{document}